\definecolor{gray}{rgb}{0.5,0.5,0.5}
\newtheorem{theorem}{Theorem}
\def\IEEElabelanchoreqn#1{\bgroup
	\def\@currentlabel{\p@equation\theequation}\relax
	\def\@currentHref{\@IEEEtheHrefequation}\label{#1}\relax
	\Hy@raisedlink{\hyper@anchorstart{\@currentHref}}\relax
	\Hy@raisedlink{\hyper@anchorend}\egroup}
\newcommand{\subnumberinglabel}[1]{\IEEEyesnumber
	\IEEEyessubnumber*\IEEElabelanchoreqn{#1}}
\begin{document}

\title{Interference Cancellation Information Geometry Approach for Massive MIMO Channel Estimation}

\author{An-An Lu, \IEEEmembership{Member, IEEE}, Bingyan Liu and Xiqi Gao, \IEEEmembership{Fellow, IEEE}

\thanks{A.-A. Lu, B. Liu and X. Q. Gao are with the National Mobile Communications Research Laboratory (NCRL), Southeast University,
Nanjing, 210096 China, and also with Purple Mountain Laboratories, Nanjing 211111, China, e-mail: aalu@seu.edu.cn, xqgao@seu.edu.cn.} 
}


\maketitle

\begin{abstract}
	In this paper, the interference cancellation information geometry approaches (IC-IGAs) for massive MIMO channel estimation are proposed.
	The proposed algorithms are low-complexity approximations of the minimum mean square error (MMSE) estimation.
	To illustrate the proposed algorithms, a unified framework of the information geometry approach for channel estimation and its geometric explanation are described first.
	Then, a modified form that has the same mean as the MMSE estimation is constructed. Based on this, the IC-IGA algorithm 
	and the interference cancellation simplified information geometry approach (IC-SIGA) are derived by applying the information geometry framework.
The \textit{a posteriori} means on the equilibrium of the proposed algorithms are proved to be equal to the mean of MMSE estimation, 
	and the complexity of the IC-SIGA algorithm in practical massive MIMO systems is further reduced by considering the beam-based statistical channel model (BSCM) and fast Fourier transform (FFT).
	Simulation results show that the proposed methods achieve similar performance as the existing information geometry approach (IGA) with lower complexity.

\end{abstract} 
\begin{IEEEkeywords}
Massive MIMO, interference cancellation (IC), information geometry approaches (IGA), beam based statistical channel model (BSCM), channel estimation.
\end{IEEEkeywords}

%
\IEEEpeerreviewmaketitle

\newpage
\section{Introduction}
Massive multi-input multi-output (MIMO)\cite{lu2014overview,bjornson2014massive, marzetta2016fundamentals} is the core enabling technology for the 5th generation (5G) mobile communications. It has further evolved into extra large-scale MIMO (XL-MIMO) \cite{de2020non, marinello2020antenna, wang2024tutorial}, which has become a research hotspot of the 6th generation (6G) mobile communications.
By increasing the number of antennas at the base station (BS), massive MIMO has significantly enhanced the spatial multiplexing and diversity gain and achieved a substantial increase in energy and spectral efficiency.
 To achieve these potential gains, the most important thing is acquiring accurate channel state information (CSI). In this paper, we focus on the channel estimation problem for massive MIMO.

The object of channel estimation is obtaining the \textit{a posteriori} information of the channel from received signals. When the \textit{a priori} probability density function (PDF) is Gaussian, the minimum mean square error (MMSE) estimator, which is also the \textit{a posteriori}  mean, is the optimal estimator. 
To fully exploit the sparsity of massive MIMO, analytical channel models with joint space-frequency representation such as the doubly beam-based stochastic model (BSCM)\cite{lu20242d} are
established, and the channel estimation problem can be transformed into the angle-delay domain. As the number of antennas increases, the pilot resources in massive MIMO are no longer enough \cite{elijah2015comprehensive, you2015pilot} and non-orthogonal pilots  \cite{wang2014design, yang2022channel} are often used. 
Thus, the channel estimator in massive MIMO usually needs to perform joint estimation of the channels of different users in the angle-delay domain, which makes the complexity of the matrix inversion in the MMSE estimation prohibitive. 

Due to the complexity issue, low-complexity channel estimators that can achieve near MMSE performance are widely investigated in the literature. In \cite{shariati2014low}, a polynomial expansion (PE) channel estimation is proposed for massive MIMO with arbitrary statistics. In \cite{wang2019channel}, a low-complexity channel estimation with low dimensional channel gain estimation is proposed for massive MIMO with uniform planar array (UPA) by estimating the angle of arrival (AOA) first. 
Deep learning-based channel estimation approaches are proposed for beamspace mmWave massive MIMO and multi-cell massive MIMO systems in \cite{he2018deep} and \cite{balevi2020massive}, respectively.
In \cite{bellili2019generalized}, a generalized approximate message passing (GAMP) method is proposed for channel estimation of a massive MIMO mmWave channel. 
Among these approaches, the GAMP might be the most promising one to be implemented in practical systems. However, the derivation of the GAMP is not easy to follow since it lacks of rigorous and concise mathematical explanation.
 In \cite{yang2022channel}, an information geometry approach (IGA) that can achieve similar performance with similar complexity as the GAMP is proposed for massive MIMO channel estimation.

Information geometry theory arises from the study of invariant geometric structures in statistical inference \cite{amari2016information} and provides the mathematical foundation of statistics \cite{ay2017information}. It views the space of probability distributions as manifold and tackles problems in information science by using the concepts of differential geometry with tensor calculus \cite{nielsen2022many}.   
	Information geometry theory also plays an important role in machine learning, signal processing, optimization theory\cite{amari2010information}, and fields such as neuroscience \cite{oizumi2016unified} and quantum physics \cite{banchi2014quantum}.
The information geometry explanation of the belief propagation (BP) algorithm \cite{pearl1988probabilistic} is given in \cite{ikeda2004stochastic}, which also shows that the concave-convex procedure (CCCP) method \cite{yuille2002cccp} computing the marginal distribution can be interpreted by information geometry.
In \cite{ikeda2004information}, the decoding algorithms for Turbo codes and low-density parity check (LDPC) codes are derived from the viewpoint of information geometry, and the equilibrium and error of the algorithms are analyzed.

The research on MIMO and massive MIMO based on information geometry is rarely seen in the literature. In \cite{zia2007information}, an information geometric approach is proposed to approximate ML estimation in semi-blind MIMO channel identification.
For massive MIMO, information geometry is introduced in \cite{yang2022channel} and  \cite{yang2024signal} to derive information geometry approaches (IGA) for channel estimation and detection, respectively.
Moreover, a simplified IGA (S-IGA) algorithm is proposed in \cite{yang2023channel} by using the constant envelope property of the channel measurement matrix. 
Information geometry provides a unified framework for understanding belief propagation or message passing based algorithms. 
The detailed relation between the IGA and AMP algorithm is provided in [In preparation].

The information geometry approach define the auxiliary probability density functions (PDFs) based on the orignal PDF to obtain a low complexity algorithm.
 In \cite{yang2022channel}, The auxiliary PDFs are defined based on the elements of the received signal, and each auxiliary PDF computes the message of all the channel elements. 
 Thus, a natural question is whether we can derive a new channel estimation algorithm that is different from and has a lower complexity than that in \cite{yang2022channel}. To answer this question, we propose the interference cancellation information geometry approach (IC-IGA) for massive MIMO channel estimation in this paper. 
 In the new algorithm, each auxiliary PDF focuses on the message for one element of the channel vector, and both time and space complexities are much lower than that of the IGA algorithm.  To derive this new algorithm, we first provide a unified framework of the channel estimation information geometry approach, and explain the geometric meaning of the equilibrium of this approach. Then, we construct a modified channel estimation form that has the same mean as the MMSE estimation and apply the unified framework to obtain the new IG algorithm. To further reduce the complexity, the interference cancellation simplified information geometry approach (IC-SIGA) is proposed.
Finally, the \textit{a posteriori} means on the equilibriums of the proposed algorithms are proved to be equal to the mean of MMSE estimation, and the complexity analysis is provided.

The rest of this paper is organized as follows. The preliminaries about the manifold of complex Gaussian distributions are provided in Section II. 
The general information geometry framework for massive MIMO channel estimation is presented in
Section III. 
The derivations of IC-IGA and IC-SIGA are presented in
Sections IV and V, respectively. 
Simulation results are provided in Section VI. 
The conclusion is drawn in Section
VII.

{\it Notations}: Throughout this paper, uppercase and lowercase boldface letters are used for matrices and vectors, respectively. The superscripts $(\cdot)^*$, $(\cdot)^T$, and $(\cdot)^H$ denote the conjugate, transpose, and conjugate transpose operations, respectively. The mathematical expectation operator is denoted by ${\mathbb E}\{\cdot\}$. 
The operators $\det(\cdot)$ represent the matrix determinant, and $\|\cdot\|_2$ is the $\ell_2$ norm. The operators $\odot$ and $\otimes$ denote the Hadamard and Kronecker product, respectively. The $N \times N$ identity matrix is denoted by $\mathbf I_N$, and
 $\mathbf I_{N, M}$ is used to denote $[\mathbf I_N ~ \mathbf 0_{N,(M-N)} ]$ when $N<M$ and $[\mathbf I_M ~ \mathbf 0_{M,(N-M)} ]^T$ when $N>M$. 
A vector composed of the diagonal elements of $\mathbf X$ is denoted by $\text{diag}(\mathbf X)$, and a diagonal matrix with $\mathbf x$ along its diagonal is denoted by $\text{diag}(\mathbf x)$.
We use $h_n$ or $[\mathbf h]_n$, $a_{mn}$ or $[\mathbf{A}]_{mn}$, $[\mathbf{A}]_{:,n}$ and $[\mathbf{A}]_{m,:}$ to denote the $n$-th element of the vector $\mathbf h$, the $(m,n)$-th element of the matrix $\mathbf{A}$, the $n$-th column and the $m$-th row of matrix $\mathbf{A}$, respectively.
The symbol $\lceil x \rceil$ denotes the smallest
integer among those larger than $x$.
Define $\mathbb Z_{N}^+ = \{ 0, 1, \cdots, N\}$. 
The operation $a\bmod b$ denotes the integer $a$ modulo the integer $b$.

\section{Preliminaries}

In this section, we present an information geometric perspective on the space of multivariate complex Gaussian distributions by using the concepts from \cite{amari2016information}.

\subsection{Affine and Dual Affine Coordinate Systems}

From information geometry theory, we have that the manifold of complex Gaussian distribution is a dually flat manifold, which has an affine coordinate system and a dual affine coordinate system.
The two coordinate systems are also called natural parameters and expectation parameters in the literature.  In the following, we describe the two affine coordinate systems in detail.

The Gaussian distributions belong to the exponential family of distributions \cite{simeone2022machine}. Let $\bm\theta, \bm\Theta$ be the natural parameter of a complex Gaussian distribution of a random vector $\mathbf x \in \mathbb C^{N \times 1}$, then the PDF $p(\mathbf{x}; \bm\theta, \bm\Theta)$ is defined as
\begin{IEEEeqnarray}{Cl}
 	\label{eq:gaussComp}
 	p( \mathbf x;\bm\theta, \bm\Theta) 
 	= \exp\left\{  \mathbf x^H \bm\theta + \bm\theta^H \mathbf x + \mathbf x^H\bm\Theta\mathbf x -  \psi(\bm\theta, \bm\Theta)  \right\}
 \end{IEEEeqnarray}
where $\psi(\bm\theta, \bm\Theta)$ is the normalization factor, which is called free energy function and given by
 \begin{IEEEeqnarray}{Cl}
	\label{eq:free_energy_function}
	\psi(\bm\theta, \bm\Theta) 
	&= N\log(\pi) - \log\det(-\bm\Theta) - \bm\theta^H\bm\Theta^{-1}\bm\theta.
\end{IEEEeqnarray}
Let $\mathcal M = \{p( \mathbf x;\bm\theta, \bm\Theta) \}$ be the manifold of multivariate complex Gaussian distributions and  $\bm\theta, \bm\Theta$ is an coordinate system.  
The free energy function $\psi(\bm\theta, \bm\Theta)$ is a convex function of $\bm\theta, \bm\Theta$ and introduces an affine flat structure, which means the $\bm\theta, \bm\Theta$ is an affine coordinate system and each coordinate axis
of $\bm\theta, \bm\Theta$ is a straight line. Furthermore, the Bregman divergence \cite{bregman1967relaxation} from $\bm\theta, \bm\Theta$ to $\bm\theta', \bm\Theta'$ derived from $\psi(\bm\theta, \bm\Theta)$ is the same as the Kullback–Leibler (KL) divergence from $p( \mathbf x;\bm\theta', \bm\Theta')$ to $p( \mathbf x;\bm\theta, \bm\Theta)$.

 
The dual coordinate system of $\mathcal M$ is obtained by the Legendre transformation \cite{amari2000methods}, \textit{i.e.}, the gradients of the free energy function $\psi(\bm\theta, \bm\Theta)$.  
From \eqref{eq:free_energy_function}, we can obtain the dual coordinate $\bm\mu,\mathbf M$ as
\begin{IEEEeqnarray}{Cl}
	\subnumberinglabel{eq:dualmuM}
	\frac{\partial \psi }{\partial \bm\theta^*} 
	&= \mathbb E\{ \mathbf x\}
	= \bm\mu
	= -\bm\Theta^{-1}\bm\theta
	\label{eq:dual_mu}\\
	\frac{\partial \psi }{\partial \bm\Theta} 
	&= \mathbb E\{ \mathbf x \mathbf x^H\}
	= \mathbf M
	= \bm\Theta^{-1}\bm\theta \bm\theta^H\bm\Theta^{-1}
	- \bm\Theta^{-1}
	= \bm\mu\bm\mu^H + \bm\Sigma
	\label{eq:dual_M} 
\end{IEEEeqnarray}
where $\bm\Sigma$ is the covariance matrix of $\mathbf x$.
The dual coordinate is the combination of the first and second order moments of $\mathbf x$ and is also called the expectation parameter.
The dual function of $\psi$ is given by 
\begin{IEEEeqnarray}{Cl} 
	\phi=\psi^* 
	&= \int p(\mathbf x;\bm\theta,\bm\Theta) 
	\log {p(\mathbf x;\bm\theta,\bm\Theta)} \, \mathrm d \mathbf x
\end{IEEEeqnarray}
which is the negative entropy of the PDF $p(\mathbf x;\bm\theta,\bm\Theta)$.
By using the dual coordinate system, we have that
\begin{IEEEeqnarray}{Cl}
  \phi(\bm\mu, \mathbf M) &=  c - \log\det(\mathbf{M} - \bm\mu\bm\mu^H)   
\end{IEEEeqnarray}
where $c$ is a constant.
The dual function $\phi$ is a convex function of $\bm\mu, \mathbf M$ and induces the dual affine flat structure. The Bregman divergence  from $\bm\mu, \mathbf M$ to $\bm\mu', \mathbf M'$ derived from $\phi$ is the KL divergence from $p(\mathbf{x};\bm\mu, \mathbf M)$ to $p(\mathbf{x};\bm\mu', \mathbf M')$. 

 The transformations from expectation parameters to natural parameters can also be obtained from the Legendre transformation as
\begin{IEEEeqnarray}{ClCl} 
\subnumberinglabel{eq:transform_nat_from_exp}
	\bm\theta &= \frac{\partial \phi }{\partial \bm\mu^*} = \bm\Sigma^{-1}\bm\mu  \qquad \\
	 \bm\Theta &= \frac{\partial \phi }{\partial \mathbf{M}} = -\bm\Sigma^{-1} 
\end{IEEEeqnarray}
where $\bm\Sigma=\mathbf{M} -\bm\mu\bm\mu^H$ is the covariance matrix and is used here for brevity.
By using the expectation parameter, the PDF can also be written in the familiar form as
\begin{IEEEeqnarray}{Cl}
	\label{eq:gaussCompEx}
	p(\mathbf x;\bm\mu,\bm\Sigma) 
	&= \exp\left\{ \mathbf x^H\bm\Sigma^{-1}\bm\mu
	+  \bm\mu^H\bm\Sigma^{-1}\mathbf x
	- \mathbf x^H\bm\Sigma^{-1}\mathbf x
	-  \psi(\bm\mu,\bm\Sigma) \right\}
\end{IEEEeqnarray}
where $\psi(\bm\mu,\bm\Sigma) = \log(\pi^N) + \log\det(\bm\Sigma) + \bm\mu^H\bm\Sigma^{-1}\bm\mu$.

\subsection{$e$-flat Submanifold and $m$-Projection}
\label{eflat_and_mproj}

After introducing the affine and dual affine coordinate systems, we present the definitions of $e$-flat and $m$-projection, which are very important when describing the information geometry approach for channel estimation.

A submanifold $\mathcal M_1 \subset \mathcal M$ is called $e$-flat if it has a linear constraint in the affine coordinate $\bm\theta, \bm\Theta$. The term $m$-flat can be similarly defined by using the dual affine coordinate $\bm\mu, \mathbf{M}$.
An example of  $e$-flat submanifold is the manifold of independent complex Gaussian distributions, defined as
\begin{IEEEeqnarray}{Cl}
	\mathcal M_0 
	= \{p( \mathbf x;\bm\theta_0, \bm\Theta_0)\} 
\end{IEEEeqnarray}
where  $\bm\Theta_0 \in \mathbb R^{N \times N}$ is a diagonal matrix. Since $\bm\Theta_0$ are diagonal matrices, the expectation parameter is very easy to obtain  as
\begin{IEEEeqnarray}{ClCl} 
	\bm\mu_0 &= -\bm\Theta^{-1}\bm\theta_0, \qquad
	&\bm\Sigma_0 &= -\bm\Theta_0^{-1}.  
\end{IEEEeqnarray}

The projection to an $e$-flat manifold is called $m$-projection because the projection can be realized linearly in the dual affine coordinate system.
Let $p( \mathbf x;\bm\theta_0, \bm\Theta_0)$ and $p( \mathbf x;\bm\theta_1, \bm\Theta_1)$ be two points in the manifold $\mathcal M$, refered as $P_0$ and $P_1$, and $P_0 \in \mathcal M_0$  and $P_1 \notin \mathcal M_0$.  

The $m$-projection is unique and minimizes the KL divergence. 
Specifically, the $m$-projection from $P_1$ to $\mathcal M_0$ is defined as
\begin{IEEEeqnarray}{Cl}
	p(\mathbf x;\bm\theta_1^{0},\bm\Theta_1^{0})
	&= 
	\mathlarger{\Pi}_{\mathcal M_0}^m \{p(\mathbf x;\bm\theta_1,\bm\Theta_1)\} \notag\\
	&= \mathop{\arg\min}\limits_{p(\mathbf x;\bm\theta_0,\bm\Theta_0) \in \mathcal M_0} D_{KL}\left(p(\mathbf x;\bm\theta_1,\bm\Theta_1) ; p(\mathbf x;\bm\theta_0,\bm\Theta_0)\right).
\end{IEEEeqnarray}

By using the affine and dual affine coordinate systems, the $m$-projection is easy to obtain. First, rewrite the KL divergence as \cite{amari2016information}
 \begin{IEEEeqnarray}{Cl}
& D_{KL}(P_{1} ; P_{0}) =  \phi(\bm\mu_1, \mathbf{M}_1) + \psi(\bm\theta_0, \bm\Theta_0) - \bm\mu_1^H\bm\theta_0   -  \bm\theta_0^H\bm\mu_1 -{\rm tr}(\mathbf{M}_1\bm\Theta_0).  
\label{eq:KL_Divergence}
\end{IEEEeqnarray} 
Then, the minimum can be obtained from the first-order optimal condition  
\begin{IEEEeqnarray}{Cl}
	\subnumberinglabel{eq:projection}
	\frac{\partial D_{KL}(P_{1} ; P_{0}) }{\partial \bm\theta_0^*} 
	&= \frac{\partial \psi }{\partial \bm\theta_0^*} 
	- \bm\mu_1 =  \bm\mu_0 -\bm\mu_1 \\
	\frac{\partial D_{KL}(P_{1} ; P_{0}) }{\partial \bm\Theta_0} 
	&= \frac{\partial \psi }{\partial \bm\Theta_0} 
	 - \frac{\partial {\rm tr}(\mathbf{M}_1\bm\Theta_0)  }{\partial \bm\Theta_0}  =   \mathbf I\odot {\mathbf M}_0  
	- \mathbf I\odot {\mathbf M}_1
\end{IEEEeqnarray}
where $\mathbf I\odot {\mathbf M}_1$ is obtained because $\bm\Theta_0$ is a diagnal matrix. Thus, we have $\bm\mu_0=\bm\mu_1$ and $\mathbf I\odot {\mathbf M}_0  
	= \mathbf I\odot {\mathbf M}_1$ for the projection point, and $\bm\theta_1^{0},\bm\Theta_1^{0}$ are their dual coordiantes. The projection is simple in the dual affine coordinate system. Let $P_1^0$ be the projection point, the dual straight line connecting $P_1$ and $P_1^0$  is the shortest one among those dual straight lines from  $P_1$ to $M_0$.

\section{Information Geometry Framework for Massive MIMO Channel Estimation}
In this section, we provide a framework of information geometry methods for channel estimation in massive MIMO systems inspired by Section 11.3.3 and 11.3.4 in \cite{amari2016information}.

\label{sec:IGAproce}

\subsection{Problem Formulation}
In massive MIMO channel estimation, a general received signal model is given by
\begin{IEEEeqnarray}{Cl}\label{reModel3}
	\mathbf y= \mathbf A\mathbf h+ \mathbf z
\end{IEEEeqnarray}
where 
$\mathbf A \in \mathbb C^{M \times N}$ is the deterministic measurement matrix, $\mathbf h$ is a random Gaussian vector distributed as $\mathbf h \sim \mathcal{CN}(\mathbf 0,\mathbf D)$, and $\mathbf z$ is the complex Gaussian noise vector distributed as $\mathbf h \sim \mathcal{CN}(\mathbf 0,\sigma_z^2\mathbf I)$.

For the received signal model in \eqref{reModel3},
the posterior distribution can be written as
\begin{IEEEeqnarray}{Cl}
	p(\mathbf h| \mathbf y)  
	&= \exp\left\{ 
	\sigma_z^{-2}\mathbf h^H\mathbf A^H\mathbf y 
	+ \sigma_z^{-2}\mathbf y^H \mathbf A\mathbf h -\mathbf h^H(\sigma_z^{-2}\mathbf A^H\mathbf A  + \mathbf{D}^{-1})\mathbf h -\psi \right\}.  \label{eq:Post_PDF}
\end{IEEEeqnarray}
According to \eqref{eq:gaussComp}, the natural parameters of $p(\mathbf h| \mathbf y)$ are
\begin{IEEEeqnarray}{Cl}
	\bm\theta &= \sigma_z^{-2} \mathbf A^H\mathbf y  
	\IEEEyesnumber\IEEEyessubnumber*\\
	\bm\Theta &=-(\sigma_z^{-2}\mathbf A^H\mathbf A  + \mathbf{D}^{-1})
\end{IEEEeqnarray}
whereas the dual affine coordinate can be obtained from \eqref{eq:dualmuM} as
\begin{IEEEeqnarray}{Cl}
	\bm\mu 
	&= -\bm\Theta^{-1}\bm\theta 
	=(\sigma_z^{-2}\mathbf A^H\mathbf A + \mathbf{D}^{-1})^{-1} \sigma_z^{-2} \mathbf A^H\mathbf y 
	\IEEEyesnumber\IEEEyessubnumber*\\
	\mathbf M 
	&= \bm\mu\bm\mu^H-\bm\Theta^{-1}
	= \bm\mu\bm\mu^H+(\sigma_z^{-2}\mathbf A^H\mathbf A + \mathbf{D}^{-1})^{-1}. 
\end{IEEEeqnarray}

The dual affine coordinate $\bm\mu$ is the posterior mean of $\mathbf h$, and thus is also the MMSE estimation. 
For massive MIMO systems, the complexity of $\bm\mu$ is often too high due to the inversion of the large dimensional matrix. 
Thus, one of the most important problems for massive MIMO is to derive low-complexity channel estimation methods.


\subsection{Information Geometry Framework}
From Section~\ref{eflat_and_mproj}, we know that if we can $m$-project $p(\mathbf{h};\bm\theta, \bm\Theta)$ onto the $e$-flat submanifold ${\cal{M}}_0$, then $\bm\mu$ is equal to the mean at the projection point, which is easy to obtain. However, the $m$-projection still involves the matrix inversion of the large dimensional matrix, and thus can not provide a low-complexity solution.

Information geometry provides other low-complexity ways to find a point in  ${\cal{M}}_0$ whose dual coordinate is approximation or the same as that of the $m$-projection point instead of using the $m$-projection. The IGA algorithm proposed in \cite{yang2022channel} is a specific algorithm derived based on information geometry, but its complexity can be further reduced. To extend the information geometry approach to derive new low-complexity algorithms, we provide a framework of information geometry for massive MIMO channel estimation. 

We call the submanifold ${\cal{M}}_0$ the target manifold since we want to find a target point in it. Since the target point can not be obtained directly by using the $m$-projection, the auxiliary manifolds and PDFs are needed to find a way to approximate the $m$-projection. 
The natural parameters or affine coordinates of original posterior PDF are $\bm\theta_{or} = \sigma_z^{-2}\mathbf{A}^H\mathbf y $,  $\bm\Theta_{or} = -(\sigma_z^{-2}\mathbf A^H\mathbf A  + \mathbf{D}^{-1})$.   
With the auxiliary manifolds, the process of the information geometry framework for channel estimation is summarized below:
\begin{compactenum}
	\item[\quad (1)]The natural parameters $\bm\theta_{or}$ and $\bm\Theta_{or}$ are split to construct $Q$ auxiliary manifolds of PDFs and one target manifold of PDFs;
	\item[\quad (2)]Initialize the auxiliary points and the target point in the auxiliary manifolds and target manifold, respectively;  
	\item[\quad (3)]Calculate the $m$-projections of the auxiliary points to the target manifold and compute the beliefs in the affine coordinate system;
	\item[\quad (4)]Update the natural parameters of the auxiliary and target points;
	\item[\quad (5)]Repeat (3) and (4) until the algorithm converges or fixed iterations, output the mean and variance of the target point.
\end{compactenum}

With the framework, the $m$-projection of the original point to the target manifold is approximated by the $m$-projections from the auxiliary points to the target manifold. To make the approximation well enough, two important conditions described in the following subsections are needed. 

\subsection{Split of Natural Parameter and the $e$-Condition}

In the general information geometry framework, the most important thing is to define the auxiliary points and manifolds. These definitions depend on the way of splitting natural parameter and determine what specific algorithm can be derived. 

The split of $\bm\theta_{or}$ and $\bm\Theta_{or}$ into $Q$ items is given as
\begin{IEEEeqnarray}{Cl}
	\subnumberinglabel{eq:IGsplit}
	\bm\theta_{or} = \sigma_z^{-2}\mathbf{A}^H\mathbf y 
	= \sum_{q=1}^Q \mathbf{b}_q \\
	\bm\Theta_{or}  = -(\sigma_z^{-2}\mathbf{A}^H\mathbf{A} + \mathbf{D}^{-1} )
	= -(\sum_{q=1}^Q \mathbf{C}_q + \bm\Lambda_c)
\end{IEEEeqnarray}
where the setting of $\mathbf{b}_q$, $\mathbf{C}_q$, $Q$, and $\bm\Lambda_c$ depends on specific algorithms, and $\bm\Lambda_c$ is usually a diagonal matrix. Let $\mathbf a_q = [\mathbf A^H]_{:,q}$ be the $q$-th column of $\mathbf A^H$. In the IGA algorithm proposed in \cite{yang2022channel}, the split $ \mathbf{b}_q = \mathbf a_q y_q$, $\mathbf{C}_q = \mathbf a_q \mathbf a_q^H$, $Q=M$, and $\bm\Lambda_c=\mathbf{D}^{-1}$ is used.

Based on the split, we define $Q$ auxiliary points or PDFs. Let $\bm\Lambda_q$ be a diagonal matrix.
The natural parameter of the $q$-th auxiliary point $p( \mathbf h; {\bm\theta}_q ,{\bm\Theta}_q) $ is defined by
\begin{IEEEeqnarray}{Cl}
	\subnumberinglabel{eq:tThetaq}
	{\bm\theta}_q &= {\bm\lambda}_q+\mathbf b_q
	\label{eq:thetaq}\\
	{\bm\Theta}_q &= - ({\bm\Lambda}_q + \mathbf C_q + \bm\Lambda_c).
	\label{eq:Thetaq}
\end{IEEEeqnarray}
where $\bm \lambda_q$ and $\bm\Lambda_q$ are variables in the natural parameter. 
The corresponding auxiliary PDF and manifold are then given as 
\begin{IEEEeqnarray}{Cl}\label{eq:auxpost}
	p( \mathbf h; {\bm\theta}_q , {\bm\Theta}_q)
	= 
	\exp\left\{ 
	\mathbf h^H(\bm\lambda_q + \mathbf b_q)
	+ (\bm\lambda_q + \mathbf b_q)^H\mathbf h
	- \mathbf h^H(\bm\Lambda_q + \mathbf C_q   + \bm\Lambda_c)\mathbf h - \psi_q \right\}  .
\end{IEEEeqnarray}
and
%
\begin{IEEEeqnarray}{Cl}
	\mathcal M_q
	= \left\{ p( \mathbf h; {\bm\theta}_q , {\bm\Theta}_q) \right\}.
\end{IEEEeqnarray}
These auxiliary manifolds $\mathcal M_q$s are parallel to each other when $\mathbf C_q$s are not diagonal since the points in different auxiliary manifolds never intersect.

The target manifold is still the manifold of independent complex Gaussian distributions
$\mathcal M_0
	= \left\{ p( \mathbf h; {\bm\theta}_0 , {\bm\Theta}_0)\right\}$. 
To be consistent with the auxiliary points, the natural parameter of the target point is defined as
\begin{IEEEeqnarray}{Cl}
	\subnumberinglabel{eq:tTheta0hat}
	{\bm\theta}_0 &= {\bm\lambda}_0 \\
	{\bm\Theta}_0 &= -( {\bm\Lambda}_0 +\bm\Lambda_c) 
\end{IEEEeqnarray} 
where ${\bm\Lambda}_0$ is diagonal.
The corresponding PDF $ p( \mathbf h; {\bm\theta}_0 , {\bm\Theta}_0)$ is
\begin{IEEEeqnarray}{Cl}
	\label{eq:tarpoint}
	 p( \mathbf h; {\bm\theta}_0 , {\bm\Theta}_0) 
	= \exp\left\{ \mathbf h^H {\bm\lambda}_0
	+ {\bm\lambda}_0^H\mathbf h
	- \mathbf h^H {(\bm\Lambda}_0 +\bm\Lambda_c)\mathbf h 
	- \psi_0  \right\}  
\end{IEEEeqnarray}
The target manifold is also parallel to all the auxiliary manifolds.

\begin{figure}
	\centering
	\includegraphics[width=0.6\linewidth]{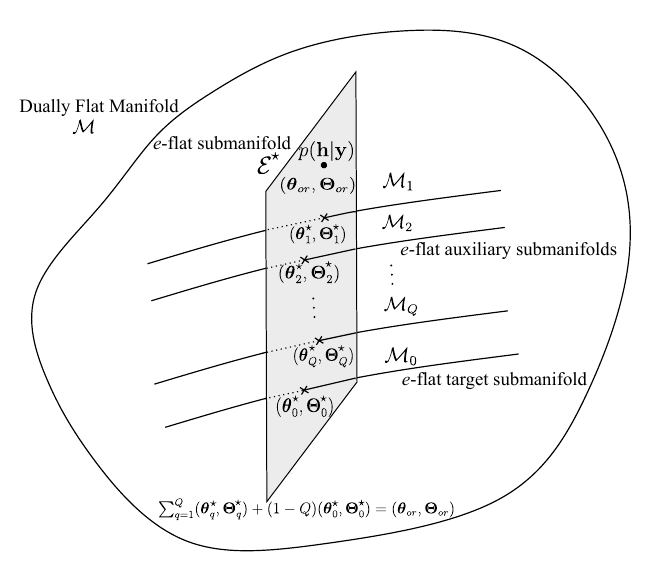}
	\caption{$e$-condition}
	\label{fig:econd}
\end{figure}

Now, we introduce the first condition of the information geometry framework as
\begin{IEEEeqnarray}{Cl}
	\label{eq:econdnew}
	\sum_{q=1}^Q({\bm\theta}_q,{\bm\Theta}_q) + (1-Q)({\bm\theta}_0,{\bm\Theta}_0) = (\bm\theta_{or},\bm\Theta_{or})
\end{IEEEeqnarray}
It means the original point, the target point and the auxiliary points are on a hyperplane in the affine coordinate system.
This is called the $e$-condition since it is a linear condition in the coordinate system $\bm\theta, \bm\Theta$ as shown in Fig.~\ref{fig:econd}.
Because of the split, the $e$-condition always holds if
\begin{IEEEeqnarray}{Cl} 
	\sum_{q=1}^Q(\bm\lambda_q,\bm\Lambda_q) + (1-Q)(\bm\lambda_0,\bm\Lambda_0) = 0.
	\label{eq:econ}
\end{IEEEeqnarray}
Thus, the above formula is also the $e$-condition. The $e$-condition makes sure the target point and auxiliary points are related to the original point, which is important to finally get an approximation of the $m$-projection point.

\subsection{The $m$-Condition and General Algorithm}
The $e$-condition only states the relation between the natural parameters of the original point, the auxiliary points, and the target point, but what we need is part of the expectation parameters of the original point. To obtain an approximation of the $m$-projection point, we need another condition, \textit{i.e}, the $m$-condition.

The $m$-condition is named because it is a linear condition in the dual affine coordinate system $\bm\mu, \mathbf{M}$, and is given by
\begin{IEEEeqnarray}{Cl}
\label{eq:m_condition}
	 ({\bm\mu}_q^{\star},\mathbf I\odot \mathbf{M}_q^{\star}) = ({\bm\mu}_0^{\star},\mathbf I\odot \mathbf{M}_0^{\star}  ) \quad \forall q\in \mathbb Z_Q^+ .
\end{IEEEeqnarray}  
From Section~\ref{eflat_and_mproj}, it means all the $m$-projection points of the auxiliary points are the same and equal to the target point. By combining the $m$-condition with the $e$-condition, a good approximation of the $m$-projection point on the target manifold of the original point can be obtained.

From \eqref{eq:dualmuM}, the expectation parameters ${\bm\mu}_q,{\mathbf M}_q$  of auxiliary points can be obtained as
\begin{IEEEeqnarray}{Cl}
	{\bm\mu}_q 
	& = ({\bm\Lambda}_q + \mathbf C_q +\bm\Lambda_c)^{-1} ({\bm\lambda}_q+\mathbf{b}_q ) 
	\IEEEyesnumber\IEEEyessubnumber*
	\label{eq:muqhat}\\
	{\mathbf M}_q
	& = {\bm\mu}_q{\bm\mu}_q^H
	+ ({\bm\Lambda}_q + \mathbf C_q +\bm\Lambda_c)^{-1}.
\end{IEEEeqnarray} 
Then, the expectation parameters of the $m$-projection points on the target manifold $\mathcal M_0$ satisfies
\begin{IEEEeqnarray}{Cl}
	\subnumberinglabel{eq:muMmproj0}
	{\bm\mu}_q ^0
	&= {\bm\mu}_q\\
	\mathbf I\odot {\mathbf M}_q ^0	
	&= \mathbf I\odot {\mathbf M}_q
\end{IEEEeqnarray}
as shown in Section~\ref{eflat_and_mproj}, and further we have the covariance of the $m$-projection ${\bm\Sigma}_q ^0
= \mathbf I \odot {\bm\Sigma}_q$.
The natural parameters of the $m$-projection points can be obtained as  
\begin{IEEEeqnarray}{Cl}
	\subnumberinglabel{eq:projtTheta}
	{\bm\theta}_q^0
	&=   - {\bm\Theta}_q^0
	({\bm\Lambda}_q + \mathbf C_q  +\bm\Lambda_c )^{-1} ({\bm\lambda}_q+\mathbf{b}_q )\\
	{\bm\Theta}_q^0
	& = - (\mathbf I\odot 
	({\bm\Lambda}_q + \mathbf C_q +\bm\Lambda_c)^{-1})^{-1}.
\end{IEEEeqnarray}

To make both the $e$-condition and the $m$-condition hold, the auxiliary points need to exchange beliefs.
Define ${\bm\lambda}_q^0$ and ${\bm\Lambda}_q^0$ to make ${\bm\theta}_q^0
= {\bm\lambda}_q^0$ and
${\bm\Theta}_q^0
= - ({\bm\Lambda}_q^0  +\bm\Lambda_c)$ hold.
The beliefs are defined as
\begin{IEEEeqnarray}{Cl}
	\subnumberinglabel{eq:xXi}
	\bm\xi_q 
	&={\bm\lambda}_q^0 
	- {\bm\lambda}_q \\
	\bm\Xi_q 
	&= {\bm\Lambda}_q^0
	- {\bm\Lambda}_q.
\end{IEEEeqnarray}
Then, the natural parameters of the $m$-projection points can be expressed as 
\begin{IEEEeqnarray}{Cl}
	\subnumberinglabel{eq:tThetaq0}
	{\bm\theta}_q^0
	&= {\bm\lambda}_q + \bm\xi_q  \\
	{\bm\Theta}_q^0
	&= - \left({\bm\Lambda}_q + \bm\Xi_q +\bm\Lambda_c \right).
\end{IEEEeqnarray}
By comparing them with the natural parameters of auxiliary point $ p( \mathbf h; {\bm\theta}_q , {\bm\Theta}_q)$, it can be observed that $\bm\xi_q$,  $\bm\Xi_q$ are approximations of $\mathbf b_q$,  $\mathbf C_q$ in ${\bm\theta}_q$ and ${\bm\Theta}_q$, respectively, where $\bm\Xi_q$ is also diagonal. 

After defining the beliefs, the iterative update of natural parameters of the target point and auxiliary points are constructed as
\begin{IEEEeqnarray}{Cl}
	\subnumberinglabel{eq:newlLambda0n}
	\bm\lambda_0^{t+1}&=\sum_q  \bm\xi_q^t\\
	\bm\Lambda_0^{t+1}&=\sum_q  \bm\Xi_q^t 
\end{IEEEeqnarray}
and
\begin{IEEEeqnarray}{Cl}
	\subnumberinglabel{eq:newlLambdaqn}
	\bm\lambda_q^{t+1}&=\sum_{q'\neq q} \bm\xi_{q'}^t
	= \bm\lambda_0^{t+1} - \bm\xi_q^t\\
	\bm\Lambda_q^{t+1}&=\sum_{q'\neq q} \bm\Xi_{q'}^t
	= \bm\Lambda_0^{t+1} - \bm\Xi_q^t.
\end{IEEEeqnarray}
From the above two equations, it is observed that the $e$-condition always holds.
An information geometry based algorithm is obtained by iteratively calculating \eqref{eq:projtTheta}, \eqref{eq:xXi}, \eqref{eq:newlLambda0n} and \eqref{eq:newlLambdaqn}.
When the algorithm converges, it is easy to obtain that 
\begin{IEEEeqnarray}{Cl}
	\subnumberinglabel{eq:tThetaq0}
	({\bm\theta}_q^0)^{\star}
	&= {\bm\lambda}_0^{\star}   \\
	({\bm\Theta}_q^0)^{\star}
	&= - \left({\bm\Lambda}_0^{\star} + \bm\Lambda_c \right) = {\bm\Theta}_0^{\star}
\end{IEEEeqnarray}
which means all the $m$-projection points are equal to the target point, and is equivalent to the $m$-condition. In conclusion, both the $e$-condition and the $m$-condition are satisfied when the algorithm converges. 

Finally, the output of this algorithm is the mean and covariance of the target point
\begin{IEEEeqnarray}{Cl}
	{\bm\mu}_0 
	&= -{\bm\Theta}_0^{-1}
	{\bm\theta}_0
	= {\bm\Lambda}_0^{-1}{\bm\lambda}_0
	\IEEEyesnumber\IEEEyessubnumber*\\
	{\bm\Sigma}_0
	&= -{\bm\Theta}_0^{-1}
	= {\bm\Lambda}_0^{-1}.
\end{IEEEeqnarray}
It is regarded as the approximated mean and covariance of the marginal PDF corresponding to the original PDF. When the algorithm does not converge, damping can be introduced in the updating of beliefs to ensure the convergence of the algorithm without changing its equilibrium. 
Although the process of the $m$-projection in \eqref{eq:projtTheta} still involves the matrix inversion, \textit{i.e.}, $({\bm\Lambda}_q + \mathbf C_q +\bm\Lambda_c)^{-1}$, it can be implemented with low complexity by properly setting of $\mathbf C_q$ since ${\bm\Lambda}_q$ and $\bm\Lambda_c$ are diagonal matrices. For example, in the IGA algorithm proposed in \cite{yang2022channel}, the matrix $\mathbf{C}_q$ is a rank-$1$ matrix.

%

\subsection{Geometrical Explanation}
This iterative process and the stabilization point can be explained geometrically.
Define the $m$-flat manifold $\mathcal M^{\star}$ and the $e$-flat manifold $\mathcal E^{\star}$ as
\begin{IEEEeqnarray}{Cl}
	\mathcal M^{\star}
	&= \left\{p(\mathbf{h}; \bm\theta,\bm\Theta)|(\bm\mu, \mathbf I\odot \mathbf M)=
	(\bm\mu_q^{\star},\mathbf I\odot \mathbf M_q^{\star}) = (\bm\mu_0^{\star},\mathbf I\odot \mathbf M_0^{\star}), \forall q \in \mathbb Z_Q^+\right\} 
	\IEEEyesnumber\IEEEyessubnumber*\\
	\mathcal E^{\star}
	&= \left\{p(\mathbf{h}; \bm\theta,\bm\Theta)|(\bm\theta,\bm\Theta)
	= \sum_{q=1}^Q c_q ({\bm\theta}_q^{\star},{\bm\Theta}_q^{\star}) + (1-\sum_{q=1}^Q  c_q) ({\bm\theta}_0^{\star},{\bm\Theta}_0^{\star})   \right\}
\end{IEEEeqnarray}
where $c_q$ are the positive coefficients. 
The geometric interpretation of the $m$-condition is given by Fig.~\ref{fig:mcond}.
The $e$-flat auxiliary and target manifolds are perpendicular to the $m$-flat manifold $\mathcal M^{\star}$  under the metric induced by the KL divergence.

\begin{figure}
	\centering
	\includegraphics[width=0.6\linewidth]{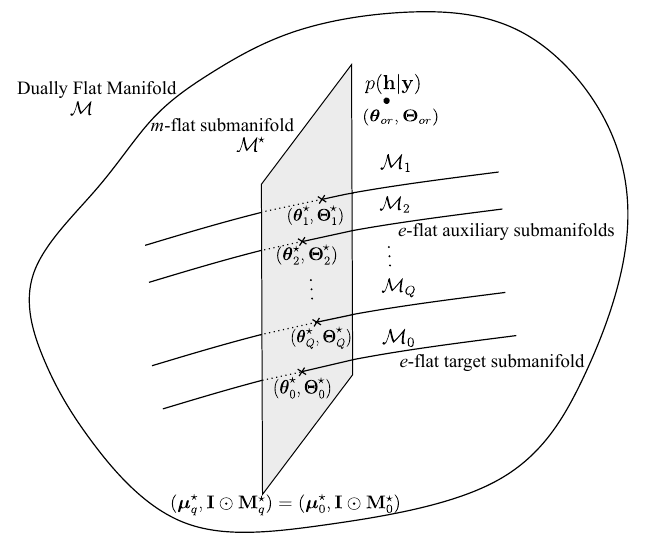}
	\caption{$m$-condition}
	\label{fig:mcond}
\end{figure}

From the $e$-condition and the $m$-condition shown in Figs.~\ref{fig:econd} and \ref{fig:mcond}, it can be seen that when the algorithm converges, all the auxiliary points, the target points and the original points are also on the same $e$-flat manifold $\mathcal E^{\star}$. Meanwhile, all the auxiliary points and the target point are on the same $m$-flat manifold $\mathcal M^{\star}$.
The $m$-condition makes sure the $m$-projection points of the auxiliary points and the target point are the same. The $e$-condition relates the auxiliary points and the target point to the original point. By combining these two conditions, the original point is close to the manifold $\mathcal M^{\star}$.
Theorem 11.8 of \cite{amari2016information} proves that $\mathcal M^{\star}$ contains the original point when the factor graph is acyclic, \textit{i.e.}, which means the exact mean of the original distribution is obtained. However, this property is not guaranteed when the factor graph is cyclic, which is the common case for the channel estimation problem.
Fortunately, for the channel estimation problem, it is usually able to prove the mean $\hat{\bm\mu}_0^{\star}$ is equal to the mean $\bm\mu_{or}$ when the algorithm converges as shown in \cite{yang2022channel}.


\section{IC-IGA for Massive MIMO Channel Estimation}
In this section, an IC-IGA  for massive MIMO channel estimation is proposed. First, a modified form equivalent to the MMSE estimation is given to define the auxiliary manifolds. Then, the framework of the information geometry approach is applied to derive this new IGA algorithm. Finally, the equilibrium and complexity analysis are provided.

\subsection{Definition of Auxiliary Manifolds with Modified MMSE Form}

In this subsection, we provide a new way of splitting the natural parameters based on a modified MMSE form. Then, the corresponding auxiliary manifolds are constructed. 


To derive a new low-complexity IGA algorithm, we want to make each auxiliary manifold focus on the computation of one element of $\mathbf{h}$.
According to the framework of information geometry methods, the $n$-th auxiliary PDF can be defined as
\begin{IEEEeqnarray}{Cl}
	p(\mathbf h; {\bm\theta}_n,{\bm\Theta}_n)
	= \exp\{\mathbf h^H({\bm\lambda}_{n} + \mathbf b_n) + ({\bm\lambda}_{n}^H + \mathbf b_n^H )\mathbf h - \mathbf h^H({\bm\Lambda}_{n} + \mathbf C_n)\mathbf h  - \psi_n\}
\end{IEEEeqnarray}	
where ${\bm\theta}_n={\bm\lambda}_{n} + \mathbf b_n$ , ${\bm\Theta}_n={\bm\Lambda}_{n} + \mathbf C_n$ and $\bm\Lambda_c$ is set to be a zero matrix.
 
Let $\mathbf K
= \sigma_z^{-2} \mathbf A^H \mathbf A$. 
To make the $n$-th auxiliary PDF only compute the information of the $n$-th element of $\mathbf{h}$, 
a natural idea is to set $\mathbf C_n$ and $\mathbf b_n$  as
\begin{IEEEeqnarray}{Cl}
	\mathbf P_{1n}\mathbf C_n\mathbf P_{1n}^H
	=\left(\begin{array}{cc}
		c_n & \frac{1}{2}\bar{\mathbf k}_n^H\\ 
		\frac{1}{2}\bar{\mathbf k}_n& \mathbf 0 
	\end{array}\right),
	\qquad
	\mathbf P_{1n}\mathbf b_n
	=\left(\begin{array}{c}
		\frac{1}{\sigma_z^2}\mathbf a_n^H\mathbf y\\ 
		\mathbf 0
	\end{array}\right)
\end{IEEEeqnarray}	
where $\mathbf a_n$ is the $n$-th column of $\mathbf A$,
 $\bar{\mathbf k}_n$ is the vector obtained by deleting the $n$-element $k_{nn}$ from the $n$-column of $\mathbf K$,  
$c_n =\sigma_z^{-2}\mathbf a_n^H\mathbf a_n + d_n^{-1}$, and $\mathbf P_{1n} \in \mathbb R^{N \times N}$ is the ordering matrix obtained by extracting $n$-th row of $\mathbf I_N$ and put it in the first row. The matrix $\mathbf P_{1n}$ has the property that $\mathbf P_{1n}\mathbf P_{1n}^H =\mathbf P_{1n}^H\mathbf P_{1n} = \mathbf I_N$.
By left-multiplying $\mathbf P_{1n}$ or right-multiplying $\mathbf P_{1n}^H$, one can extract the $n$-th row or column of a given matrix and put it in the first row or column.

However, there is a flaw in this setup. The items associated with elements other than $h_n$ in $\mathbf h$ are missing in the auxiliary function, and thus might not form a Gaussian-distributed PDF since ${\bm\Lambda}_{n} + \mathbf C_n$ are not always positive semidefinite. To overcome this issue, we present the following theorem.
\begin{theorem}
	\label{th: theorem MMSE equivalence}
	Let the matrices $\mathbf T $ and $\bm\Upsilon$ be defined as $\mathbf T = \sigma_z^{-2}\mathbf A^H\mathbf A - \mathbf I \odot (\sigma_z^{-2}\mathbf A^H\mathbf A)$ and 
$\bm\Upsilon = \left(\mathbf I \odot (\sigma_z^{-2}\mathbf A^H\mathbf A) + \mathbf D^{-1}\right)^{-1}$. The estimator
	\begin{IEEEeqnarray}{Cl}\label{eq:newMMSE}
		\hat{\mathbf h} = \left( \sigma_z^{-2}\mathbf A^H\mathbf A  + \mathbf D^{-1} + \mathbf T + \mathbf T \bm\Upsilon \mathbf T^H\right)^{-1} \left( \sigma_z^{-2} \mathbf A^H \mathbf y + \mathbf T \bm\Upsilon \sigma_z^{-2} \mathbf A^H \mathbf y  \right)
	\end{IEEEeqnarray}
	is equivalent to the MMSE estimator.
\end{theorem}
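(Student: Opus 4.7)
My plan is to unroll the complicated expression on the right-hand side of \eqref{eq:newMMSE} by a single algebraic factorization, and show it collapses to the standard MMSE formula
\begin{IEEEeqnarray*}{Cl}
\bm\mu = \left(\sigma_z^{-2}\mathbf A^H\mathbf A + \mathbf D^{-1}\right)^{-1}\sigma_z^{-2}\mathbf A^H\mathbf y.
\end{IEEEeqnarray*}
To lighten the notation I would first set $\mathbf K = \sigma_z^{-2}\mathbf A^H\mathbf A$, so that by definition $\mathbf T = \mathbf K - \mathbf I\odot\mathbf K$ and $\bm\Upsilon^{-1} = \mathbf I\odot\mathbf K + \mathbf D^{-1}$. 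Two elementary observations will then drive everything: first, $\mathbf T$ is Hermitian, since $\mathbf K$ is Hermitian and $\mathbf I\odot\mathbf K$ is a real diagonal matrix, so in particular $\mathbf T^H = \mathbf T$; second, the sum $\mathbf K + \mathbf D^{-1}$ can be rewritten as $\bm\Upsilon^{-1} + \mathbf T$, so the target MMSE estimator is just $(\bm\Upsilon^{-1} + \mathbf T)^{-1}\sigma_z^{-2}\mathbf A^H\mathbf y$.

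Next I would massage the inverse in \eqref{eq:newMMSE}. Using $\mathbf T^H = \mathbf T$, the middle factor inside the inverse becomes
\begin{IEEEeqnarray*}{Cl}
\sigma_z^{-2}\mathbf A^H\mathbf A + \mathbf D^{-1} + \mathbf T + \mathbf T\bm\Upsilon\mathbf T^H
= \bm\Upsilon^{-1} + 2\mathbf T + \mathbf T\bm\Upsilon\mathbf T.
\end{IEEEeqnarray*}
The central identity I would then verify by direct expansion is
\begin{IEEEeqnarray*}{Cl}
\bm\Upsilon^{-1} + 2\mathbf T + \mathbf T\bm\Upsilon\mathbf T
= (\mathbf I + \mathbf T\bm\Upsilon)\,\bm\Upsilon^{-1}\,(\mathbf I + \bm\Upsilon\mathbf T),
\end{IEEEeqnarray*}
which is a one-line check after multiplying out the right-hand side. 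This factorization is the only non-routine step, and is what makes the whole equivalence work; everything else is bookkeeping.

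Given the factorization, inversion yields $(\mathbf I + \bm\Upsilon\mathbf T)^{-1}\bm\Upsilon(\mathbf I + \mathbf T\bm\Upsilon)^{-1}$, and the trailing $(\mathbf I + \mathbf T\bm\Upsilon)$ coming from the input vector in \eqref{eq:newMMSE} cancels cleanly, leaving
\begin{IEEEeqnarray*}{Cl}
\hat{\mathbf h} = (\mathbf I + \bm\Upsilon\mathbf T)^{-1}\bm\Upsilon\,\sigma_z^{-2}\mathbf A^H\mathbf y
= (\bm\Upsilon^{-1} + \mathbf T)^{-1}\sigma_z^{-2}\mathbf A^H\mathbf y,
\end{IEEEeqnarray*}
and substituting back $\bm\Upsilon^{-1} + \mathbf T = \sigma_z^{-2}\mathbf A^H\mathbf A + \mathbf D^{-1}$ recovers the MMSE expression exactly. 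The main obstacle, and really the only one, is spotting the symmetric factorization $(\mathbf I + \mathbf T\bm\Upsilon)\bm\Upsilon^{-1}(\mathbf I + \bm\Upsilon\mathbf T)$; once that is in hand, the rest is cancellation. I would also note in passing that invertibility of $\mathbf I + \bm\Upsilon\mathbf T$ follows from invertibility of $\bm\Upsilon^{-1} + \mathbf T = \sigma_z^{-2}\mathbf A^H\mathbf A + \mathbf D^{-1}$, which is positive definite because $\mathbf D\succ \mathbf 0$, so no degeneracy issues arise in any of the inversions used above.
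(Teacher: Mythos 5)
Your proposal is correct and is essentially the paper's own argument: your symmetric factorization $(\mathbf I + \mathbf T\bm\Upsilon)\bm\Upsilon^{-1}(\mathbf I + \bm\Upsilon\mathbf T)$ equals $(\mathbf I + \mathbf T\bm\Upsilon)(\sigma_z^{-2}\mathbf A^H\mathbf A + \mathbf D^{-1})$, which is exactly the identity the paper establishes by multiplying $\mathbf I + \mathbf T\bm\Upsilon$ inside and outside the MMSE inverse, just read in the opposite direction. The cancellation of the $(\mathbf I + \mathbf T\bm\Upsilon)$ factor against the modified input vector and the invertibility remark are likewise the same bookkeeping, so no gap here.
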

\begin{proof}
	The proof is provided in Appendix \ref{appendice: theorem MMSE equivalence}.
\end{proof}

Based on Theorem \ref{th: theorem MMSE equivalence}, we can use the following PDF 
\begin{IEEEeqnarray}{Cl}
	\label{eq:postEqual}
	 p( \mathbf h|\mathbf y)  &=  \exp\Big\{\mathbf h^H(\sigma_z^{-2} \mathbf A^H \mathbf y + \mathbf T \bm\Upsilon \sigma_z^{-2} \mathbf A^H \mathbf y  ) + (\sigma_z^{-2} \mathbf A^H \mathbf y + \mathbf T \bm\Upsilon \sigma_z^{-2} \mathbf A^H \mathbf y  )^H\mathbf h 
	\notag\\ 
	& \qquad \qquad  - \mathbf h^H\left(\sigma_z^{-2}\mathbf A^H\mathbf A  + \mathbf D^{-1} + \mathbf T + \mathbf T \bm\Upsilon \mathbf T^H\right)^{-1}\mathbf h - \tilde \psi \Big\} 
\end{IEEEeqnarray}	
to compute the original mean. 
With the modified MMSE form, we propose a new way of splitting natural parameter as
\begin{IEEEeqnarray}{Cl}
	\subnumberinglabel{eq:IGsplit}
	\bm\theta_{or} = \sigma_z^{-2}\mathbf{A}^H\mathbf y  + \mathbf T \bm\Upsilon \sigma_z^{-2} \mathbf A^H \mathbf y
	= \sum_{n=1}^N \mathbf{b}_n \\
	\bm\Theta_{or}  = -(\sigma_z^{-2}\mathbf{A}^H\mathbf{A} + \mathbf{D}^{-1}  + \mathbf T + \mathbf T \bm\Upsilon \mathbf T^H)
	= -(\sum_{n=1}^N \mathbf{C}_n)
\end{IEEEeqnarray}
where $\mathbf b_n$ and $\mathbf C_n$ are defined as
\begin{IEEEeqnarray}{Cl}
	\mathbf P_{1n}\mathbf C_n\mathbf P_{1n}^H
	=\left(\begin{array}{cc}
		c_n & \bar{\mathbf k}_n^H\\ 
		\bar{\mathbf k}_n& \frac{1}{c_n}\bar{\mathbf k}_n\bar{\mathbf k}_n^H 
	\end{array}\right),
	\qquad
	\mathbf P_{1n}\mathbf b_n
	=\left(\begin{array}{c}
		\frac{1}{\sigma_z^2}\mathbf a_n^H\mathbf y\\ 
		\frac{1}{\sigma_z^2} \frac{1}{c_n}  \bar{\mathbf k}_n \mathbf a_n^H\mathbf y
	\end{array}\right).
	\label{eq:turCbn} 
\end{IEEEeqnarray}	

The natural parameter of the $n$-th auxiliary PDF is defined as
\begin{IEEEeqnarray}{Cl}
\subnumberinglabel{eq:IGAuxiliary}
\bm\theta_n&=\bm\lambda_{-n} + \mathbf b_n \\
\bm\Theta_n&=-(\bm\Lambda_{-n} + \mathbf C_n)
\end{IEEEeqnarray}
where  $\bm\lambda_{-n}, \bm\Lambda_{-n}$ are given by 
\begin{IEEEeqnarray}{Cl}
	\bm\lambda_{-n}
	=(\lambda_1,\cdots ,\lambda_{n-1},0,\lambda_{n+1}, \cdots ,\lambda_N)^T
	\IEEEyesnumber\IEEEyessubnumber*\\
	\bm\Lambda_{-n}
	= \text{diag}(\Lambda_1,\cdots,\Lambda_{n-1},0,\Lambda_{n+1},\cdots,\Lambda_N).
\end{IEEEeqnarray}	
The subscript ${-n}$ instead of $n$ is used because we impose more constraints on $\bm\lambda_n, \bm\Lambda_n$, and the $n$-th element has been set to zero.
The auxiliary PDF $p(\mathbf h;\bm\theta_n,\bm\Theta_n)$ can be constructed as
\begin{IEEEeqnarray}{Cl}
	\label{eq:affTur}
	p(\mathbf h;\bm\theta_n,\bm\Theta_n)
	= \exp\{\mathbf h^H(\bm\lambda_{-n} + \mathbf b_n) + (\bm\lambda_{-n}^H + \mathbf b_n^H )\mathbf h - \mathbf h^H(\bm\Lambda_{-n} + \mathbf C_n)\mathbf h  - \psi_n\}.
\end{IEEEeqnarray}	
and the corresponding auxiliary manifolds are $\mathcal M_n
= \left\{p(\mathbf h;\bm\theta_n,\bm\Theta_n)\right\}, \forall n \in \mathbb Z_N^+$.


The natural parameter of the target point is defined as $\bm\theta_0=\bm\lambda$, 
$\bm\Theta_0=-\bm\Lambda$, 
where
\begin{IEEEeqnarray}{Cl}
	\bm\lambda &=( \lambda_1, \lambda_2,\cdots,  \lambda_N)^T
	\IEEEyesnumber\IEEEyessubnumber*\\
	\bm\Lambda &={\rm diag}( \Lambda_1, \Lambda_2,\cdots,  \Lambda_N).
\end{IEEEeqnarray}
The target PDF becomes
\begin{IEEEeqnarray}{Cl}
	p_0(\mathbf h;\bm\theta_0,\bm\Theta_0)
	= \exp\{\mathbf h^H\bm\lambda + \bm\lambda^H\mathbf h - \mathbf h^H\bm\Lambda\mathbf h  - \psi_0\}
\end{IEEEeqnarray}	
and the target manifold is still $\mathcal M_0
= \left\{p(\mathbf h;\bm\theta_0,\bm\Theta_0)\right\}$. It is easy to check the $e$-condition always holds due to 
\begin{IEEEeqnarray}{Cl} 
	\sum_{n=1}^N(\bm\lambda_{-n},\bm\Lambda_{-n}) + (1-N)(\bm\lambda,\bm\Lambda) = 0. 
\end{IEEEeqnarray}

\subsection{Derivation of IC-IGA}
After constructing the auxiliary manifolds, the estimation problem can be solved by applying the information geometry framework of section \ref{sec:IGAproce}.
For convenience, we define $\bm\lambda_{n}, \bm\Lambda_{n}$ as
\begin{IEEEeqnarray}{Cl}
	\bm\lambda_n
	=(\lambda_1,\cdots ,\lambda_{n-1},\lambda_{n+1}, \cdots ,\lambda_N)^T \in \mathbb C^{(N-1) \times 1}
	\IEEEyesnumber\IEEEyessubnumber*\\
	\bm\Lambda_n=\text{diag}(\Lambda_1,\cdots,\Lambda_{n-1},\Lambda_{n+1},\cdots,\Lambda_N)\in \mathbb C^{(N-1) \times (N-1)}.
\end{IEEEeqnarray}	

The following theorem gives the beliefs $\bm\xi_n$, $\bm\Xi_n$ corresponding to the $n$-th auxiliary point at the $t$-th iteration.

\begin{theorem}
	\label{th: theorem beliefs IC-IGA}
	Let the beliefs $\bm\xi_n$, $\bm\Xi_n$ be defined as $\bm\xi_n=\bm\lambda_{-n}^0 - \bm\lambda_{-n}$ and $\bm\Xi_n = \bm\Lambda_{-n}^0 -\bm\Lambda_{-n}$. Then, their elements are given by
	\begin{IEEEeqnarray}{Cl}
		[\bm\xi_n]_i  
		= \begin{cases}
			r_n \mu_n, \quad i=n\\
			0,   \quad \text{others}
		\end{cases} 
		\IEEEyesnumber\IEEEyessubnumber*\\
		{[\bm\Xi_n]}_{ii}  = \begin{cases}
			r_n, \quad i=n\\
			0,   \quad \text{others}
		\end{cases}
	\end{IEEEeqnarray}	
	where $\mu_n = \frac{1}{\sigma_z^2}c_n^{-1} \left(\mathbf a_n^H\mathbf y
	- \mathbf a_n^H\mathbf A \bm\Lambda^{-1} \bm\lambda
	+ \mathbf a_n^H\mathbf a_n \Lambda_n^{-1} \lambda_n\right)$, 
	$r_n = c_n(1 + e_n)^{-1}$, and
	$e_n = \frac{1}{c_n}\bar{\mathbf k}_n^H \bm\Lambda_n^{-1} \bar{\mathbf k}_n$.
\end{theorem}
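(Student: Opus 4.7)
\textbf{Proof plan for Theorem~\ref{th: theorem beliefs IC-IGA}.}

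The plan is to compute the beliefs by executing the $m$-projection formulas \eqref{eq:projtTheta} for the specific splitting prescribed in \eqref{eq:IGAuxiliary}--\eqref{eq:turCbn}. Since $\bm\Lambda_c=\mathbf 0$ in this section, I need $\Sigma_n=(\bm\Lambda_{-n}+\mathbf C_n)^{-1}$ and $\bm\mu_n=\Sigma_n(\bm\lambda_{-n}+\mathbf b_n)$. Then $\bm\Lambda_{-n}^0=(\mathbf I\odot\Sigma_n)^{-1}$ and $\bm\lambda_{-n}^0=\bm\Lambda_{-n}^0\bm\mu_n$, from which the two beliefs follow by subtraction. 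The leverage for low complexity is the special rank--one structure of $\mathbf C_n$: from \eqref{eq:turCbn} one verifies that $\mathbf P_{1n}\mathbf C_n\mathbf P_{1n}^H=\tfrac{1}{c_n}\mathbf v_n\mathbf v_n^H$ with $\mathbf v_n=(c_n,\bar{\mathbf k}_n^T)^T$, and similarly $\mathbf P_{1n}\mathbf b_n=\tfrac{\mathbf a_n^H\mathbf y}{\sigma_z^2 c_n}\mathbf v_n$.

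Working in the permuted basis, the key step is to block--invert
\[
\mathbf P_{1n}(\bm\Lambda_{-n}+\mathbf C_n)\mathbf P_{1n}^H
=\begin{pmatrix} c_n & \bar{\mathbf k}_n^H \\ \bar{\mathbf k}_n & \bm\Lambda_n+\tfrac{1}{c_n}\bar{\mathbf k}_n\bar{\mathbf k}_n^H\end{pmatrix}.
\]
The decisive algebraic cancellation is that the Schur complement of the $(1,1)$ block reduces exactly to the diagonal matrix $\bm\Lambda_n$, because the rank--one correction built into the $(2,2)$ block was chosen precisely to cancel $\bar{\mathbf k}_n c_n^{-1}\bar{\mathbf k}_n^H$. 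Applying the standard block--inverse formula then gives a closed form for $\Sigma_n$ whose $(1,1)$ entry is $\tfrac{1+e_n}{c_n}=1/r_n$ and whose $(2,2)$ block is $\bm\Lambda_n^{-1}$. Consequently, in the original ordering, $[\Sigma_n]_{nn}=1/r_n$ and $[\Sigma_n]_{ii}=1/\Lambda_i$ for $i\neq n$, so $[\bm\Lambda_{-n}^0]_{nn}=r_n$ and $[\bm\Lambda_{-n}^0]_{ii}=\Lambda_i$. Since $[\bm\Lambda_{-n}]_{nn}=0$ and $[\bm\Lambda_{-n}]_{ii}=\Lambda_i$ for $i\neq n$, the belief $\bm\Xi_n$ has exactly the claimed form.

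For the mean, I would multiply the same block inverse against $\mathbf P_{1n}(\bm\lambda_{-n}+\mathbf b_n)=(\mathbf a_n^H\mathbf y/\sigma_z^2,\;\bm\lambda_n+\tfrac{\bar{\mathbf k}_n\mathbf a_n^H\mathbf y}{\sigma_z^2 c_n})^T$. Two simplifications occur: the $(i\neq n)$--block produces $\bm\Lambda_n^{-1}\bm\lambda_n$ because the terms linear in $\mathbf a_n^H\mathbf y$ cancel, so $[\bm\mu_n]_i=\lambda_i/\Lambda_i$ for $i\neq n$; and the $n$--th entry collapses, via the same cancellation combined with $e_n$, to $[\bm\mu_n]_n=\tfrac{1}{c_n\sigma_z^2}\mathbf a_n^H\mathbf y-\tfrac{1}{c_n}\bar{\mathbf k}_n^H\bm\Lambda_n^{-1}\bm\lambda_n$. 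Rewriting $\bar{\mathbf k}_n^H\bm\Lambda_n^{-1}\bm\lambda_n$ using $[\bar{\mathbf k}_n]_{i'}=\sigma_z^{-2}\mathbf a_n^H\mathbf a_i$ yields $\sigma_z^{-2}(\mathbf a_n^H\mathbf A\bm\Lambda^{-1}\bm\lambda-\mathbf a_n^H\mathbf a_n\Lambda_n^{-1}\lambda_n)$, which matches the stated $\mu_n$ exactly. Multiplying by $\bm\Lambda_{-n}^0$ gives $[\bm\lambda_{-n}^0]_n=r_n\mu_n$ and $[\bm\lambda_{-n}^0]_i=\lambda_i$ for $i\neq n$, and subtracting $\bm\lambda_{-n}$ produces the claimed $\bm\xi_n$.

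The main obstacle I anticipate is not conceptual but notational: one must carry the permutation $\mathbf P_{1n}$ through consistently and track the ``$n$--th entry removed'' versus ``$n$--th entry set to zero'' distinction between $\bm\lambda_n,\bm\Lambda_n$ and $\bm\lambda_{-n},\bm\Lambda_{-n}$. Beyond that, the only nontrivial algebra is verifying that the Schur complement collapses to $\bm\Lambda_n$ and that the term $\bar{\mathbf k}_n^H\bm\Lambda_n^{-1}\bm\lambda_n$ expands into the ``interference cancellation'' combination $\mathbf a_n^H\mathbf A\bm\Lambda^{-1}\bm\lambda-\mathbf a_n^H\mathbf a_n\Lambda_n^{-1}\lambda_n$, which explains the name of the algorithm and motivates why the modified MMSE form of Theorem~\ref{th: theorem MMSE equivalence} was introduced in the first place.
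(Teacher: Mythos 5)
Your proposal is correct and follows essentially the same route as the paper's Appendix B: permute with $\mathbf P_{1n}$, block-invert $\mathbf P_{1n}(\bm\Lambda_{-n}+\mathbf C_n)\mathbf P_{1n}^H$, and read off the $m$-projection parameters. The only (harmless) difference is that you obtain the $(1,1)$ entry $r_n^{-1}=c_n^{-1}(1+e_n)$ directly from the collapse of the Schur complement to $\bm\Lambda_n$, whereas the paper first writes $r_n=c_n-\bar{\mathbf k}_n^H(\tfrac{1}{c_n}\bar{\mathbf k}_n\bar{\mathbf k}_n^H+\bm\Lambda_n)^{-1}\bar{\mathbf k}_n$ and then invokes Sherman--Morrison to reach the same expression; you also make explicit the final rewriting of $\bar{\mathbf k}_n^H\bm\Lambda_n^{-1}\bm\lambda_n$ into the interference-cancellation form, which the paper leaves implicit.
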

\begin{proof}
	The proof is provided in Appendix \ref{appendice: theorem beliefs IC-IGA}.
\end{proof}

Before proceeding to the derivation of IC-IGA, we present more insights about Theorem \ref{th: theorem beliefs IC-IGA}.  
	Let $\bar{\mathbf h}_n$ and $s_n$  be defined as $\bar{\mathbf h}_n = (h_1,\cdots,h_{n-1},h_{n+1},\cdots,h_N)^T$ and 
\begin{IEEEeqnarray}{Cl}
	s_n 
	&= \frac{1}{\sigma_z^2}\mathbf a_n^H\mathbf y - \bar{\mathbf k}_n^H\bar{\mathbf h}_n
\end{IEEEeqnarray}	
we have that 
\begin{IEEEeqnarray}{Cl}
	  \frac{1}{Z_n} \exp\{\mathbf h^H\mathbf b_n + \mathbf b_n^H \mathbf h - \mathbf h^H \mathbf C_n\mathbf h\}  &= \frac{1}{Z_n} \exp \left\{h_n^* s_n + s_n^* h_n - h_n^*c_n h_n - \frac{s_n^* s_n}{c_n}\right\} 
\end{IEEEeqnarray}	
where $Z_n$ is the normalization constant. It means this part of  $p(\mathbf h;\bm\theta_n,\bm\Theta_n)$ can be viewed as a conditional PDF
of $h_n$ as $p(h_n|h_1,\cdots,h_{n-1},h_{n+1},\cdots,h_N,\mathbf y)$.
Furthermore, the remain part of $p(\mathbf h;\bm\theta_n,\bm\Theta_n)$ can be viewed as
\begin{IEEEeqnarray}{Cl}
	 \exp\{\mathbf h^H\bm\lambda_{-n}  + \bm\lambda_{-n}^H\mathbf h - \mathbf h^H \bm\Lambda_{-n}\mathbf h  - \psi\} = p(h_1|\mathbf y) \cdots p(h_{n-1}|\mathbf y) p(h_{n+1}|\mathbf y)\cdots p(h_N|\mathbf y).
\end{IEEEeqnarray}	
The corresponding PDF $p_n(\mathbf h;\bm\theta_n,\bm\Theta_n)$ can be rewritten as
\begin{IEEEeqnarray}{Cl}
	&\quad p(\mathbf h;\bm\theta_n,\bm\Theta_n) \notag\\
	&= p(h_n|h_1,\cdots,h_{n-1},h_{n+1},\cdots,h_N,\mathbf y)p(h_1|\mathbf y) \cdots p(h_{n-1}|\mathbf y) p(h_{n+1}|\mathbf y)\cdots p(h_N|\mathbf y).
	\IEEEeqnarraynumspace
\end{IEEEeqnarray}	
Thus, the marginal PDF of other elements will be the same as that provided by $\bm\lambda_{-n}, \bm\Lambda_{-n}$, and each auxiliary manifold only updates the marginal PDF of one element.
This explains why the $n$-th auxiliary manifold has a belief of $0$ for the other elements as shown in Theorem \ref{th: theorem beliefs IC-IGA}.

From  Theorem \ref{th: theorem beliefs IC-IGA}, the $n$-th auxiliary manifold only computes the mean $\mu_n$ and variance $r_n^{-1}$ of the $n$-th element of $\mathbf h$. The corresponding natural parameter is $\lambda_n = r_n\mu _n, \Lambda_n = r_n$.
In the computation, the mean and variance of the other elements are those of the other auxiliary manifolds in the previous iteration, which is consistent with the idea of interference cancellation (IC), and therefore this method is called IC-IGA.

After calculating the beliefs, the parameters $\bm\lambda$, $\bm\Lambda$ and $\bm\lambda_{-n}$, $\bm\Lambda_{-n}$ corresponding to the target and auxiliary points are updated based on \eqref{eq:newlLambda0n} and \eqref{eq:newlLambdaqn}. 
However, this update might cause the algorithm to diverge. 
By introducing damping, the convergence of the algorithm can be enhanced without changing its equilibrium. 
Let $0<\alpha\leq 1$ be the damping coefficient, the natural parameters $\lambda_n^{t+1}$ and $\Lambda_n^{t+1}$ corresponding to the target and auxiliary points in $\bm\lambda^{t+1}$, $\bm\Lambda^{t+1}$, $\bm\lambda_{-n}^{t+1}$ and $\bm\Lambda_{-n}^{t+1}$ are updated as
\begin{IEEEeqnarray}{Cl}
	\subnumberinglabel{eq:lLambdanTur}
	\lambda_n^{t+1}
	&= \alpha r_n^{t+1}\mu_n^{t+1}
	+ (1-\alpha)\lambda_n^t \\
	\Lambda_n^{t+1}
	&= \alpha r_n^{t+1}
	+ (1-\alpha)\Lambda_n^t,
\end{IEEEeqnarray}
where the computation of $\mu_n^t$ and $r_n^t$ can be rewritten as
\begin{IEEEeqnarray}{Cl}
	\subnumberinglabel{eq:munAndrnTur}
	\mu_n^{t+1} 
	&= \frac{1}{\sigma_z^2}c_n^{-1} \left(\mathbf a_n^H\mathbf y
	- \mathbf a_n^H\mathbf A (\bm\Lambda^t)^{-1}\bm\lambda^t
	+ \mathbf a_n^H\mathbf a_n (\Lambda_n^t)^{-1}\lambda_n^t\right) \label{eq:mu_n}\\
	r_n^{t+1} &= \frac{c_n}
	{1 + e_n^{t}}
\end{IEEEeqnarray}
where $e_n^t = \frac{1}{c_n}\bar{\mathbf k}_n^H (\bm\Lambda_n^t)^{-1} \bar{\mathbf k}_n$ and 
$c_n =\frac{1}{\sigma_z^2}\mathbf a_n^H\mathbf a_n + d_n^{-1}$. 
Besides, define the matrix $\mathbf L = (\mathbf A^H\mathbf A)\odot(\mathbf A^H\mathbf A)^*$ and vector $\mathbf v
= (\Lambda_1^{-1}, \Lambda_2^{-1},\cdots,\Lambda_N^{-1})^T$, 
then the computation of $e_n^t$ can be rewritten as
\begin{IEEEeqnarray}{Cl}\label{eq:enTur}
	e_n^t = \frac{1}{\sigma_z^4}c_n^{-1}(\mathbf e_n^T \mathbf L \mathbf v^t - \mathbf a_n^H\mathbf a_n (\Lambda_n^{-1})^t \mathbf a_n^H\mathbf a_n)
\end{IEEEeqnarray}	
where $\mathbf e_n = [\mathbf I_N]_{:,n}$ is the vector where only the $n$-th element is $1$ and the rest are all $0$s. 
The channel estimation IC-IGA is summarized as Algorithm \ref{algo:IC-IGA}. 

\begin{algorithm}
	\caption{IC-IGA for channel estimation}
	\label{algo:IC-IGA}
	\KwIn{The received signal $\mathbf y$, the \textit{a priori} covariance $\mathbf D$ of $\mathbf h$ and the maximal iteration number $T$.}
	\KwOut{The approximated posterior mean $\bm\mu^t$ and covariance $(\mathbf r^t)^{-1}$ of beam domain channel $\mathbf h$.}
	\textbf{Initialization} $t=0$, $\bm\mu^{t} = \mathbf 0$, 
	$\bm\lambda^{t}=\mathbf 0$, 
	$\bm\Lambda^{t}=\mathbf I$, $\mathbf{v}^t=\mathbf{1}$, 
	calculate $\mathbf A^H\mathbf y$, 
	$\mathbf A^H\mathbf A$, and
	$\mathbf L = (\mathbf A^H\mathbf A) \odot (\mathbf A^H\mathbf A)^*$.\\
	\While{$t\leq T$}{
		Calculate the $m$-projection of $N$ auxiliary points to the target manifold and the corresponding mean $\mu_n^{t+1}$ and covariance $(r_n^{t+1})^{-1}$ as
		\begin{IEEEeqnarray}{Cl}
			\mu_n^{t+1} 
			&= \frac{1}{\sigma_z^2}c_n^{-1} \left(\mathbf a_n^H\mathbf y
	- \mathbf a_n^H\mathbf A (\bm\Lambda^t)^{-1}\bm\lambda^t
	+ \mathbf a_n^H\mathbf a_n (\Lambda_n^t)^{-1}\lambda_n^t\right) \notag\\
			r_n^{t+1} &= \frac{c_n}
			{1 + e_n^{t}} \notag\\
			e_n^{t} &= \frac{1}{\sigma_z^4 c_n}(\mathbf e_n^T \mathbf L \mathbf v^t - \mathbf a_n^H\mathbf a_n  (\Lambda_n^t)^{-1} \mathbf a_n^H\mathbf a_n).\notag
		\end{IEEEeqnarray}
		Update the parameters of the target and auxiliary points as 
		\begin{IEEEeqnarray}{Cl}
	\subnumberinglabel{eq:lLambdanTur}
	\lambda_n^{t+1}
	&= \alpha r_n^{t+1}\mu_n^{t+1}
	+ (1-\alpha)\lambda_n^t \notag\\
	\Lambda_n^{t+1}
	&= \alpha r_n^{t+1}
	+ (1-\alpha)\Lambda_n^t. \notag 
\end{IEEEeqnarray} 
		$t=t+1$.
	}
	When the algorithm converges or $t>T$, output the posterior mean $\bm\mu^t$ and covariance $(\mathbf r^t)^{-1}$ of  $\mathbf h$.
\end{algorithm}

\subsection{Equilibrium and Complexity Analysis}

In this subsection, we present analysis for the equilibrium, \textit{i.e.}, fixed point or limit point, and the complexity of the IC-IGA.

From \eqref{eq:m_condition} and \eqref{eq:econ}, Algorithm \ref{algo:IC-IGA} satisfies the following conditions at the equilibrium
\begin{IEEEeqnarray}{Cl}
	\bm\mu^{\star} = \bm\mu_n^{\star},
	\quad
	\bm\Sigma^{\star} = \mathbf I \odot \bm\Sigma_n^{\star}  
	\label{eq:equimuSigma}\\
	\bm\lambda^{\star} = \frac{1}{N-1} \sum_{n=1}^N \bm\lambda_{-n}^{\star},
	\quad 
	\bm\Lambda^{\star} = \frac{1}{N-1} \sum_{n=1}^N \bm\Lambda_{-n}^{\star}
	\label{eq:equilLambda}
\end{IEEEeqnarray}
which leads to the following theorem.
\begin{theorem}
	\label{th: IC-IGA Equilibrium}
	At the equilibrium of IC-IGA, the mean $\bm\mu^{\star}$ of $p(\mathbf h;\bm\theta_0,\bm\Theta_0)$ is equal to the mean $\bm\mu_{\text{MMSE}}$ of the posterior distribution $p(\mathbf h|\mathbf y)$.
\end{theorem}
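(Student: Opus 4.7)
My plan is to combine the equilibrium $m$-condition with the closed-form belief formulas from Theorem~\ref{th: theorem beliefs IC-IGA}, stack the resulting coordinate-wise identities, and recognize the outcome as the MMSE normal equation obtained from the posterior \eqref{eq:Post_PDF}. The argument naturally splits into three short steps.

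First, the fixed-point form of the damped update \eqref{eq:lLambdanTur} yields $\lambda_n^{\star}=r_n^{\star}\mu_n^{\star}$ and $\Lambda_n^{\star}=r_n^{\star}$, so $[\bm\mu^{\star}]_n=(\Lambda_n^{\star})^{-1}\lambda_n^{\star}=\mu_n^{\star}$ for every $n$, turning the $m$-condition \eqref{eq:equimuSigma} into a concrete coordinate-wise identity. Substituting the expression \eqref{eq:mu_n} for $\mu_n^{\star}$ and using $(\bm\Lambda^{\star})^{-1}\bm\lambda^{\star}=\bm\mu^{\star}$ together with $(\Lambda_n^{\star})^{-1}\lambda_n^{\star}=[\bm\mu^{\star}]_n$, that identity becomes
\[
c_n[\bm\mu^{\star}]_n = \sigma_z^{-2}\mathbf a_n^H\mathbf y - \sigma_z^{-2}\mathbf a_n^H\mathbf A\bm\mu^{\star} + \sigma_z^{-2}\mathbf a_n^H\mathbf a_n[\bm\mu^{\star}]_n.
\]

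Second, because $c_n = \sigma_z^{-2}\mathbf a_n^H\mathbf a_n + d_n^{-1}$, the terms $\sigma_z^{-2}\mathbf a_n^H\mathbf a_n[\bm\mu^{\star}]_n$ on the two sides cancel exactly, leaving the clean scalar equation $d_n^{-1}[\bm\mu^{\star}]_n + \sigma_z^{-2}\mathbf a_n^H\mathbf A\bm\mu^{\star} = \sigma_z^{-2}\mathbf a_n^H\mathbf y$. Third, stacking these $N$ scalar equations over $n=1,\dots,N$ assembles the rows $\mathbf a_n^H\mathbf A$ into $\mathbf A^H\mathbf A$ and the scalars $d_n^{-1}$ into $\mathbf D^{-1}$, producing
\[
(\mathbf D^{-1} + \sigma_z^{-2}\mathbf A^H\mathbf A)\bm\mu^{\star} = \sigma_z^{-2}\mathbf A^H\mathbf y,
\]
which is precisely the linear system characterizing the MMSE mean read off from \eqref{eq:Post_PDF}. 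Invertibility of the left-hand matrix then forces $\bm\mu^{\star}=\bm\mu_{\mathrm{MMSE}}$.

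I do not expect a serious technical obstacle: the heavy structural work has been absorbed into Theorem~\ref{th: theorem MMSE equivalence}, which justifies the modified MMSE form, and into Theorem~\ref{th: theorem beliefs IC-IGA}, which supplies explicit belief formulas. The one step that needs care is the cancellation of the $\sigma_z^{-2}\mathbf a_n^H\mathbf a_n[\bm\mu^{\star}]_n$ terms; this cancellation is precisely what the choice of $c_n$ built into the split \eqref{eq:turCbn} is engineered to enable, so the theorem can be read as a consistency check that the interference-cancellation split was designed correctly. It is also worth noting that the argument never invokes the variance half of the $m$-condition, consistent with the theorem asserting equality of means alone.
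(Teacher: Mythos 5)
Your proof is correct, but it takes a genuinely different route from the paper's own proof of this theorem. The paper works in the natural-parameter domain: it sums the equilibrium relations $\bm\lambda_{-n}^{\star}=(\bm\Lambda_{-n}^{\star}+\mathbf C_n)\bm\mu_n^{\star}-\mathbf b_n$ over all $N$ auxiliary manifolds, which reconstructs the modified operator $\sigma_z^{-2}\mathbf A^H\mathbf A+\mathbf D^{-1}+\mathbf T+\mathbf T\bm\Upsilon\mathbf T^H$ and right-hand side $(\mathbf I+\mathbf T\bm\Upsilon)\sigma_z^{-2}\mathbf A^H\mathbf y$ from $\sum_n\mathbf C_n$ and $\sum_n\mathbf b_n$, and it therefore must invoke Theorem~\ref{th: theorem MMSE equivalence} at the end to convert the modified normal equation back to the standard MMSE mean. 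You instead work coordinate-wise in the mean domain with the already-simplified expression \eqref{eq:mu_n}, where the $\mathbf T\bm\Upsilon$ structure has been absorbed during the Appendix~B derivation; after the cancellation enabled by $c_n=\sigma_z^{-2}\mathbf a_n^H\mathbf a_n+d_n^{-1}$ you land directly on $(\mathbf D^{-1}+\sigma_z^{-2}\mathbf A^H\mathbf A)\bm\mu^{\star}=\sigma_z^{-2}\mathbf A^H\mathbf y$, so Theorem~\ref{th: theorem MMSE equivalence} is never needed. Your argument is in fact essentially the paper's own proof of Theorem~\ref{th: IC-SIGA Equilibrium} (Appendix~D, built on the vectorized identity \eqref{eq:mu_nvec}) transplanted to IC-IGA, which works because the IC-IGA mean update at a fixed point coincides with the IC-SIGA one. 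What each approach buys: the paper's summation argument stays inside the information-geometry picture (the $e$- and $m$-conditions and the split of $\bm\theta_{or},\bm\Theta_{or}$) and simultaneously validates that the modified split was consistent, while yours is shorter and more elementary but leans on the scalar simplification of $\mu_n$ already carried out in the proof of Theorem~\ref{th: theorem beliefs IC-IGA}. Your fixed-point step is sound: from \eqref{eq:lLambdanTur} with $\alpha>0$, stationarity forces $\lambda_n^{\star}=r_n^{\star}\mu_n^{\star}$ and $\Lambda_n^{\star}=r_n^{\star}$, hence $[\bm\mu^{\star}]_n=\mu_n^{\star}$, and your observation that the variance half of the $m$-condition is not needed is accurate.
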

\begin{proof}
	The proof is provided in Appendix \ref{appendice: IC-IGA Equilibrium}.
\end{proof}

We now analyze the complexity of the IC-IGA algorithm in terms of time complexity (or computational complexity) and space complexity. 
In each iteration, the time complexity is mainly in the multiplication of matrix $\mathbf L$ and vector $\mathbf v$, where $\mathbf L \in \mathbb R ^{N \times N}$ and $\mathbf v \in \mathbb R^{N \times 1}$, so the complexity is $\mathcal O(N^2)$. Finally, the time complexity of this algorithm is $\mathcal O(TN^2)$, and $T$ is the number of iterations.
The free variables of this algorithm are $N$-dimensional vectors $\bm\mu$, $\mathbf r$, $\mathbf e$, $\bm\lambda$, $\text{diag}(\bm\Lambda)$, and the number of free variables is $5N$, so the space complexity of this algorithm is $\mathcal O(N)$.
When $T$ is small, the complexity of this algorithm is lower compared with the time complexity of $\mathcal O(N^3+N^2M)$ and the space complexity of $\mathcal O(N^3)$ in the MMSE algorithm.
In addition, compared with the time complexity of $\mathcal O(TMN)$ and the space complexity of $\mathcal O(MN)$ in the existing IGA algorithm \cite{yang2022channel}, the IC-IGA algorithm has a comparable time complexity and a lower space complexity when $M$ is comparable to $N$.
In practice, the channel dimension $N$ is often smaller than the received signal dimension $M$ due to the sparsity of the channel, and the IC-IGA algorithm has lower time complexity and space complexity compared to the IGA algorithm.

\section{IC-SIGA for Massive MIMO Channel Estimation with BSCM and ZC Sequences}
\label{sec:ICSIGA}
In this section, an IC-SIGA with lower complexity is proposed based on the IC-IGA by directly constructing the iterative update of the mean. To further reduce the complexity of IC-SIGA in practical systems, a massive MIMO with UPA is considered. 
By using BSCM and ZC sequences, a practical receive model is then established, and finally, the complexity of IC-SIGA is reduced by FFT and sparsity of the beam domain channel.   

\subsection{Derivation of IC-SIGA}
In this subsection, we provide the derivation of IC-SIGA, which can further reduce the complexity of channel estimation.

From the previous section, it is clear that the time complexity of the IC-IGA algorithm is mainly in the multiplication $\mathbf L\mathbf v$. 
This computation is related to the calculation of $r_n$ and $e_n$, which involves the computation of the corresponding variance of the $m$-projection of each auxiliary point.
In the following, we reconsider the computation of the mean corresponding to the $m$-projection of each auxiliary point. From \eqref{eq:mu_n} and $\bm\mu^{\star} =(\bm\Lambda^{\star})^{-1}\bm\lambda^{\star}$, we can obtain 
\begin{IEEEeqnarray}{Cl}
	\label{eq:mu_nvec}
	\bm\mu^{\star} 
	&= \frac{1}{\sigma_z^2} \left(\mathbf A^H\mathbf y
	- \mathbf A^H\mathbf A {\bm\mu}^{\star}
	+ (\mathbf I \odot \mathbf A^H\mathbf A) {\bm\mu}^{\star}\right)./\mathbf c
\end{IEEEeqnarray}
where $\mathbf c=(c_1,\cdots,c_N)^T$, 
$c_n = \frac{1}{\sigma_z^2}\mathbf a_n^H\mathbf a_n + d_n^{-1}$, and $. /$ denotes the element-by-element division of vectors.
Equation \eqref{eq:mu_nvec} indicates that $\bm\mu$ can be updated without $r_n$, $e_n$. 

However, equation \eqref{eq:mu_nvec} might also not converge. Thus, we also need to introduce the damping coefficient $0<\alpha\leq 1$.
Let $\bm\mu^{temp}$ be defined as
	\begin{IEEEeqnarray}{Cl}
		\label{eq:mu_ntemp}
		\bm\mu^{temp} 
		&= \frac{1}{\sigma_z^2} \left(\mathbf A^H\mathbf y
		- \mathbf A^H\mathbf A \bm\mu^t
		+ (\mathbf I \odot \mathbf A^H\mathbf A) \bm\mu^t\right)./\mathbf c
	\end{IEEEeqnarray}
where $\bm\mu^t$ is the mean at the $t$-th iteration.
Then, the mean of the target point $\bm\mu^{t+1}$ can be updated as
	\begin{IEEEeqnarray}{Cl}
		\bm\mu^{t+1} = \alpha\bm\mu^{temp} + (1-\alpha)\bm\mu^t.
	\end{IEEEeqnarray} 
When the algorithm converges, output the mean of the target point $\bm\mu^t$. 
The IC-SIGA for channel estimation is summarized as Algorithm \ref{algo:IC-SIGA}.

\begin{algorithm}
	\caption{IC-SIGA for channel estimation}
	\label{algo:IC-SIGA}
	\KwIn{The received signal $\mathbf y$, the \textit{a priori} covariance $\mathbf D$ of $\mathbf h$ and the maximal iteration number $T$.}
	\KwOut{The approximated posterior mean $\bm\mu^t$ of beam domain channel $\mathbf h$.}
	\textbf{Initialization} $t=0$, $\bm\mu^t = \mathbf 0$, 
	calculate $\mathbf A^H\mathbf y$, 
	$\mathbf I \odot \mathbf A^H\mathbf A$.\\
	\While{$t\leq T$}{
		Calculate the $m$-projection of $N$ auxiliary points to the target manifold and the corresponding mean $\bm\mu^{temp}$ as
		\begin{IEEEeqnarray}{Cl}
			\bm\mu^{temp} 
			&= \frac{1}{\sigma_z^2} \left(\mathbf A^H\mathbf y
			- \mathbf A^H\mathbf A \bm\mu^t
			+ (\mathbf I \odot \mathbf A^H\mathbf A) \bm\mu^t\right)./\mathbf c.
			\notag
		\end{IEEEeqnarray}
		Update the mean of the target point as
		\begin{IEEEeqnarray}{Cl}
			\bm\mu^{t+1} = \alpha\bm\mu^{temp} + (1-\alpha)\bm\mu^t. \notag
		\end{IEEEeqnarray}
		$t=t+1$.
	}
	When the algorithm converges or $t>T$, output the posterior mean $\bm\mu^t$ of  $\mathbf h$.
\end{algorithm}

For the equilibrium, we can have the following theorem by using similar methods as that in IC-IGA.
\begin{theorem}
	\label{th: IC-SIGA Equilibrium}
	At the equilibrium of IC-SIGA, the mean $\bm\mu^{\star}$ is equal to the mean $\bm\mu_{\text{MMSE}}$ of the posterior distribution $p(\mathbf h|\mathbf y)$.
\end{theorem}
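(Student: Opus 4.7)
The plan is to set up the fixed-point equation for the IC-SIGA update in Algorithm~\ref{algo:IC-SIGA} and show that it reduces algebraically to the MMSE normal equation. First, I would observe that for any damping coefficient $0<\alpha\leq 1$, the update $\bm\mu^{t+1}=\alpha\bm\mu^{temp}+(1-\alpha)\bm\mu^{t}$ has the same equilibria as the undamped map $\bm\mu^{t+1}=\bm\mu^{temp}$, since at a fixed point $\bm\mu^\star=\bm\mu^{temp}$. So it suffices to analyze the relation obtained by substituting $\bm\mu^\star$ for both $\bm\mu^t$ and $\bm\mu^{temp}$ in \eqref{eq:mu_ntemp}.

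Next, I would clear the element-wise division by $\mathbf c$ by multiplying through by $\operatorname{diag}(\mathbf c)$, turning the fixed-point condition into
\begin{IEEEeqnarray}{Cl}
\operatorname{diag}(\mathbf c)\,\bm\mu^\star = \tfrac{1}{\sigma_z^2}\mathbf A^H\mathbf y - \tfrac{1}{\sigma_z^2}\mathbf A^H\mathbf A\,\bm\mu^\star + \bigl(\mathbf I\odot \tfrac{1}{\sigma_z^2}\mathbf A^H\mathbf A\bigr)\bm\mu^\star. \notag
\end{IEEEeqnarray}
Using the explicit form $c_n=\sigma_z^{-2}\mathbf a_n^H\mathbf a_n + d_n^{-1}$, the diagonal matrix $\operatorname{diag}(\mathbf c)$ splits as $\mathbf I\odot(\sigma_z^{-2}\mathbf A^H\mathbf A) + \mathbf D^{-1}$. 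Substituting this split, the two $\mathbf I\odot(\sigma_z^{-2}\mathbf A^H\mathbf A)\bm\mu^\star$ terms cancel, leaving
\begin{IEEEeqnarray}{Cl}
\bigl(\sigma_z^{-2}\mathbf A^H\mathbf A + \mathbf D^{-1}\bigr)\bm\mu^\star = \sigma_z^{-2}\mathbf A^H\mathbf y, \notag
\end{IEEEeqnarray}
which is precisely the normal equation whose unique solution is $\bm\mu_{\text{MMSE}} = (\sigma_z^{-2}\mathbf A^H\mathbf A+\mathbf D^{-1})^{-1}\sigma_z^{-2}\mathbf A^H\mathbf y$, so $\bm\mu^\star=\bm\mu_{\text{MMSE}}$.

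Essentially the entire argument is a careful bookkeeping of which diagonal/off-diagonal pieces of $\sigma_z^{-2}\mathbf A^H\mathbf A$ appear in the update and in $\mathbf c$; no analog of the second-order belief reconciliation \eqref{eq:equimuSigma}--\eqref{eq:equilLambda} that was needed for Theorem~\ref{th: IC-IGA Equilibrium} is required, because IC-SIGA tracks only the mean. The only mild subtlety is ensuring that the posterior covariance $(\sigma_z^{-2}\mathbf A^H\mathbf A+\mathbf D^{-1})^{-1}$ is well defined, which follows because $\mathbf D^{-1}$ is positive definite. If desired, one can also derive the same result as a corollary of Theorem~\ref{th: IC-IGA Equilibrium} by noting that \eqref{eq:mu_nvec} is exactly the mean component of the IC-IGA fixed-point system \eqref{eq:munAndrnTur}, decoupled from the variance updates; since the mean equation there alone already determines $\bm\mu^\star$, IC-SIGA inherits the same equilibrium mean. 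The hard part, if any, is simply making this cancellation transparent in the writeup rather than any deep analytic obstacle.
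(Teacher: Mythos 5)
Your proposal is correct and follows essentially the same route as the paper's Appendix D: substitute $\bm\mu^\star$ into the fixed-point relation, clear the element-wise division by $\mathbf c$, use $\operatorname{diag}(\mathbf c)=\mathbf I\odot(\sigma_z^{-2}\mathbf A^H\mathbf A)+\mathbf D^{-1}$ to cancel the diagonal terms, and recover the MMSE normal equation. The added remarks on damping-invariance of equilibria and on deriving the result from the IC-IGA mean equation are sound but not needed beyond what the paper does.
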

\begin{proof}
	The proof is provided in Appendix \ref{appendice: IC-SIGA Equilibrium}.
\end{proof}

\subsection{System Configuration and Channel Model of Massive MIMO}

In this subsection, we consider a 3D massive MIMO system equipped with UPA. The system configuration and the BSCM are introduced to show the properties of the channel.

Consider a massive MIMO-OFDM system working in time division duplexing (TDD) mode.  The antenna array at the BS is a UPA with $M_r = M_zM_x$ antennas, where $M_z$ and $M_x$ are the numbers of vertical and horizontal antennas. All users are equipped with a single antenna.
The carrier frequency is $f_c$, and the wavelength is $\lambda_c$. 
The vertical and horizontal antenna spacings $d_z$ and $d_x$ are both set to half wavelength.
The number of OFDM subcarriers is $N_c$, of which $M_p$ training subcarriers are used to transmit the uplink pilot signal. 
Let the set of training subcarrier indexes be defined as $\mathcal L = \{\ell_0,\ell_1,\cdots,\ell_{M_p-1}\}$. 
Let $M_g$ and $T_s$ be the length of the cyclic prefix (CP)  and sampling interval, respectively. The subcarrier interval is $\Delta f = \frac{1}{N_c T_s}$ and the transmission bandwidth is $B = M_p \Delta f$.

The directional cosines for the $z$ and $x$ axis are defined as $u_r=\sin \theta_r$ and $v_r=\cos \theta_r\sin \phi_r$, where
$\theta_r$ and $\phi_r$ are polar and azimuthal angles
of arrival (AOA) at the BS. 
The space steering vector at the BS side is given by \cite{lu20242d} 
\begin{IEEEeqnarray}{Cl}
	\mathbf v(u_r, v_r)
	= \mathbf{v}_z(u_r) 
	\otimes \mathbf{v}_x(v_r)
	\in \mathbb C^{M_r \times 1}
\end{IEEEeqnarray}
where
\begin{IEEEeqnarray}{Cl}
	\mathbf v_z(u)
	= [1 ~~ e^{-j 2\pi \tfrac{d_z}{\lambda_c}  u} ~~  \cdots ~~ e^{-j 2\pi \tfrac{(M_{z}-1)d_z}{\lambda_c}  u }]^T \\
	\mathbf v_x(v)
	= [1 ~~ e^{-j 2\pi \tfrac{d_x}{\lambda_c} v} ~~  \cdots ~~ e^{-j 2\pi \tfrac{(M_{x}-1)d_x}{\lambda_c} v}]^T.
\end{IEEEeqnarray}
Let $u_i$ and $v_j$ be sampled directional cosines, defined as
\begin{IEEEeqnarray}{Cl}
	u_i
	&= \frac{2(i-1)-N_z}{N_z},~~i\in \mathbb Z_{N_z}^+ 
 \\
	v_i
	&= \frac{2(j-1)-N_x}{N_x},~~j\in \mathbb Z_{N_x}^+ .  
\end{IEEEeqnarray}
Based on the space steering vector, the matrix of sampled space steering vectors are defined by
\begin{IEEEeqnarray}{Cl}
	\mathbf{V} = \mathbf{V}_z \otimes \mathbf{V}_x \in \mathbb{C}^{M_{r} \times N_{r}}
\end{IEEEeqnarray}
where
\begin{IEEEeqnarray}{Cl}
	\label{eq:V}
	\mathbf{V}_z&=[\mathbf{v}_z(u_1) ~~ \mathbf{v }_z(u_2) ~~ \cdots ~~ \mathbf{v}_z(u_{N_z})] 
	\IEEEyesnumber\IEEEyessubnumber*\\
	\mathbf{V}_x&=[\mathbf{v}_x(v_1) ~~ \mathbf{v }_x(v_2) ~~ \cdots ~~ \mathbf{v}_x(v_{N_x})].
\end{IEEEeqnarray}
The symbols $N_z=F_zM_z$ and $N_x=F_xM_x$ are the numbers of sampled horizontal and vertical cosines, where $F_z$ and $F_x$ are fine factors.
The frequency steering vector is defined similarly as
\begin{IEEEeqnarray}{Cl}
	\label{eq:utau}
	\mathbf u(\tau)
	= \left[1 ~~ e^{-j 2\pi \Delta f \tau } ~~  \cdots ~~ e^{-j 2\pi (M_p-1) \Delta f \tau }\right]^T.
\end{IEEEeqnarray}
The matrix of
sampled frequency steering vectors is then given by
\begin{IEEEeqnarray}{Cl}
	\label{eq:U}
	\mathbf U=[\mathbf u(\tau_1) ~~ \mathbf u(\tau_2) ~~ \cdots ~~ \mathbf u(\tau_{N_f})] \in \mathbb C^{M_p \times N_f}
\end{IEEEeqnarray}
where $N_p=F_pM_p$ is the number of sampled delays, $F_p$ is the fine factor, $N_f=\left\lceil \frac{N_pM_g}{N_c}\right\rceil$, and
$\tau_r$ are sampled delays, given by
\begin{IEEEeqnarray}{Cl} 
	\tau_r
	&= \frac{r-1}{N_p \Delta f},~~r\in \mathbb Z_{N_f}^+. 
	\label{eq:taur}
\end{IEEEeqnarray}

Finally, the space-frequency domain channel matrix of user $k$ can be expressed as \cite{yang2022channel,lu20242d} 
\begin{IEEEeqnarray}{Cl}
	\label{eq:BSCM}
	\mathbf G_k =\mathbf V \mathbf H_k \mathbf U^T
	\in \mathbb C^{M_r \times M_p}
\end{IEEEeqnarray}
where $\mathbf H_k \in \mathbb C^{N_r \times N_f}$ is the beam domain channel matrix. 
This model is called the BSCM. The beam-domain channel matrix $\mathbf H_k$ has improved sparsity than the traditional beam-domain stochastic channel model based on the discrete Fourier transform (DFT) matrices \cite{sun2015beam}. The elements of $\mathbf H_k$ are assumed to be independent and follow a complex Gaussian distribution with zero mean and different variances.
The beam domain channel power matrix is defined as
$\bm\Omega_k$, where $[\bm\Omega_k]_{ij}$ is the variance of $[\mathbf H_k]_{ij}$.
It is the statistical CSI and remains constant over a relatively longer period than the instantaneous CSI.

\subsection{Receive Model with BSCM and ZC Sequences}
In this subsection, we describe a specific receive model in practical massive MIMO systems, which shows that $\mathbf{A}$ multiplying vector can be realized by FFT. 

The object of channel estimation is to obtain the \textit{a posteriori} information of the space-frequency channel $\mathbf G_k$, which can be calculated from the beam domain channel $\mathbf H_k$ with deterministic matrices $\mathbf U$ and $\mathbf V$. 
Thus, we focus on the estimation of $\mathbf H_k$. 
We use the pilot signal sequence in \cite{3GPP} as
\begin{IEEEeqnarray}{Cl}
	\mathbf x_k=\tilde{\mathbf x}_q \odot \mathbf u(\tau_{(p -1)N_f +1})
	\label{eq:pilots}
\end{IEEEeqnarray}
where $\tilde{\mathbf x}_q \in \mathbb C^{M_p \times 1}$ is the Zadoff-Chu (ZC) sequence defined as
\begin{IEEEeqnarray}{Cl}
	[\tilde{\mathbf x}_q]_l=e^{-j \frac{\pi (q-1) l(l-1)}{N_l}},\ l=1,\cdots,M_p
\end{IEEEeqnarray}
where $N_l$ is the largest prime number satisfying $N_l < M_p$, 
$q = \lfloor(k-1)/P\rfloor +1$ and $p = \left( (k-1)\bmod P \right) +1$ denote the root coefficient and cyclic shift, 
$Q = \lceil K/P\rceil$ is the number of roots, 
 $P$ is the number of UEs of each root, and
$\tau_{(p -1)N_f+1}$ is the sampled delay defined in \eqref{eq:taur}.

Let $\mathbf X_k=\text{diag}(\mathbf x_k) \in \mathbb C^{M_p \times M_p}$   be the pilot matrix. 
The received pilot signal matrix $\mathbf Y_t \in \mathbb C^{M_r \times M_p}$ at the $t$-th OFDM symbol is expressed as 
\begin{IEEEeqnarray}{Cl}
	\label{eq:reModelsf}
	\mathbf Y_t = \sum\limits_{k=1}^K
	\mathbf G_{k,t}\mathbf X_k + \mathbf Z_t
\end{IEEEeqnarray}
where the noise matrix $\mathbf Z_t$ consists of \textit{i.i.d.} elements with zero mean and variance $\sigma_z^2$. 

For convenience, the subscript $t$  of the OFDM symbol is omitted hereafter. 
Substituting \eqref{eq:BSCM} and \eqref{eq:pilots} into \eqref{eq:reModelsf}, we have
\begin{IEEEeqnarray}{Cl}
	\mathbf Y 
	&= \sum\limits_{q=1}^Q
	\mathbf V \left( \sum\limits_{p=1}^P \mathbf H_{q,p} \mathbf U^T \text{diag}\left( \mathbf u(\tau_{(p -1)N_f +1}) \right) \right) \tilde{\mathbf X}_q + \mathbf Z
	\label{eq:reciqp}
\end{IEEEeqnarray}
where $\tilde{\mathbf X}_q = \text{diag}\left( \tilde{\mathbf x}_q \right)$.
Let $\mathbf F_{N_p}$ be an $N_p$-dimensional DFT matrix and define
the partial DFT matrix $\mathbf U_F= [\mathbf u(\tau_1),\cdots,\mathbf u(\tau_{N_p})] \in \mathbb C^{M_p \times N_p}$. 
We then have $\mathbf U = \mathbf U_F \mathbf I_{N_p,N_f}$ and $\mathbf U_F = \mathbf I_{M_p,N_p}\mathbf F_{N_p}$.

In \eqref{eq:reciqp}, $\mathbf U^T \text{diag}\left( \mathbf u(\tau_{(p -1)N_f +1}) \right)$  can be calculated as
\begin{IEEEeqnarray}{Cl}
	\mathbf U^T \text{diag}\left( \mathbf u(\tau_{(p -1)N_f +1}) \right)
	&= \mathbf I_{N_f,N_p}\bm\Pi_{N_p}^{(p-1)N_f}\mathbf U_F^T
	\label{eq:IPiUF}
\end{IEEEeqnarray}
where 
\begin{IEEEeqnarray}{Cl}
	\bm\Pi_N^n
	= \left(\begin{array}{cc}
		\mathbf 0 & \mathbf I_{N-n}\\
		\mathbf I_n & \mathbf 0
	\end{array}\right)
\end{IEEEeqnarray}
is the permutation matrix.
Then, \eqref{eq:reciqp} can be reexpressed as 
\begin{IEEEeqnarray}{Cl}
	\mathbf Y
	&= \sum\limits_{q=1}^Q
	\mathbf V ( \sum\limits_{p=1}^P \mathbf H_{q,p} \mathbf I_{N_f,N_p} \bm\Pi_{N_p}^{(p-1)N_f} ) \mathbf U_F^T \tilde{\mathbf X}_q + \mathbf Z.
\end{IEEEeqnarray}
When $P\leq \lfloor N_p/N_f \rfloor$, we can avoid mutual aliasing of UEs with the same root and define
\begin{IEEEeqnarray}{Cl}
	\tilde{\mathbf H}_q 
	&= \left[ \mathbf H_{q,1}, \cdots,\mathbf H_{q,P}, \mathbf 0_{N_r,N_p-PN_f} \right]
	\in \mathbb C^{N_r \times N_p}.
\end{IEEEeqnarray}

Finally, the uplink received signal model is given as
\begin{IEEEeqnarray}{Cl}\label{mUreModel}
	\mathbf Y= 
	\mathbf V \mathbf H \mathbf P + \mathbf Z
\end{IEEEeqnarray}
where
\begin{IEEEeqnarray}{Cl}
	\mathbf H 
	&= [\tilde{\mathbf H}_1 ~~ \tilde{\mathbf H}_2 ~~ \cdots ~~ \tilde{\mathbf H}_Q]  \\
	\mathbf P 
	&= [\tilde{\mathbf X}_1\mathbf U_F ~~ \tilde{\mathbf X}_2\mathbf U_F ~~ \cdots ~~ \tilde{\mathbf X}_Q\mathbf U_F]^T.
	\label{eq:Pmtx}
\end{IEEEeqnarray}
By vectorizing \eqref{mUreModel}, the received signal model in vector form is
\begin{IEEEeqnarray}{Cl}\label{reModel2}
	\mathbf y= \tilde{\mathbf A}\tilde{\mathbf h}+ \mathbf z
\end{IEEEeqnarray}
where $\mathbf y = \text{vec}(\mathbf Y) \in \mathbb C^{M \times 1}$, 
$\tilde{\mathbf A} = \mathbf P^T \otimes \mathbf V \in \mathbb C^{M \times \tilde N}$, 
$\tilde{\mathbf h} = \text{vec}(\mathbf H) \in \mathbb C^{\tilde N \times 1}$, 
$\mathbf z = \text{vec}(\mathbf Z)  \sim \mathcal{CN}(\mathbf 0,\sigma_z^2\mathbf I)$, 
$M= M_r M_p$, and $\tilde N = QN_p N_r$.
Since the beam domain channels are sparse, we can obtain a low dimensional $\mathbf h = \tilde{\mathbf E} \mathbf h  \in \mathbb C^{N \times 1}$ from $\tilde{\mathbf h}$ by removing the elements with zero variance, where $\tilde{\mathbf E}$ is the extraction matrix. Then, the general received signal model $\mathbf y= {\mathbf A}{\mathbf h}+ \mathbf z$ in \eqref{reModel3} can be obtained, where 
$\mathbf A \in \mathbb C^{M \times N}$ is the matrix obtained by removing corresponding columns of $\tilde{\mathbf A}$, and $\mathbf D$ can be obtained from $\boldsymbol{\Omega}_k$ of all users.  

\subsection{Complexity Analysis} 

The time complexity of the IC-SIGA algorithm is mainly in the multiplication $\mathbf A^H \mathbf b$ and $\mathbf A \mathbf s$, where  $\mathbf s$ and $\mathbf b$ are two arbitrary vectors.
In the following, we analyze the fast implementation method and complexity of the operation $\mathbf A^H \mathbf b$. 
The computation of $\mathbf A \mathbf s$ can be implemented similarly. 

The computation of $\mathbf A^H \mathbf b$ can be written as $\mathbf A^H \mathbf b = \tilde{\mathbf A}^H \tilde{\mathbf b}$.
From $\tilde{\mathbf A}=\mathbf P^T \otimes \mathbf V$, the vector $\tilde{\mathbf A}^H \tilde{\mathbf b}$ can be transformed into a matrix as $\tilde{\mathbf A}^H \tilde{\mathbf b}
		= \text{vec}(\mathbf V^H \mathbf B \mathbf P^H)$,
	where $\mathbf B \in \mathbb C^{M_r \times M_p}$, $\text{vec}(\mathbf B) = \tilde{\mathbf b}$.
	Substituting the expression \eqref{eq:Pmtx} for $\mathbf P$  yields
$\mathbf B \mathbf P^H
		= \mathbf B (\tilde{\mathbf X}_1^*\mathbf I_{M_p,N_p}\mathbf F_{N_p}^*,\cdots,\tilde{\mathbf X}_Q^*\mathbf I_{M_p,N_p}\mathbf F_{N_p}^*)$.
Define $\tilde{\mathbf B}_q = [\mathbf B \tilde{\mathbf X}_q^*, \mathbf 0_{M_p,N_p-M_p}] \in \mathbb C^{M_r \times N_p}$, 
	then $\mathbf B \tilde{\mathbf X}_q^*\mathbf I_{M_p,N_p}\mathbf F_{N_p}^*
		= \tilde{\mathbf B}_q \mathbf F_{N_p}^*$,
where $\tilde{\mathbf B}_q \mathbf F_{N_p}^*$ can be realized by FFT.
	Since $\mathbf X_q^*$ is a diagonal matrix, the complexity of $\mathbf B \mathbf X_q^*$ is negligible.
	Thus, the time complexity of $\mathbf B \mathbf P^H$ is $\mathcal O(QM_rN_p\log_2 N_p)$.

From \eqref{eq:V} we have $\mathbf V
		= (\mathbf I_{M_z,N_z} \otimes \mathbf I_{M_x,N_x})
		(\mathbf F_{N_z} \otimes \mathbf F_{N_x}).$ 
	Then, $\mathbf V^H(\mathbf B \mathbf P^H)$ can be realized by FFT with a time complexity of $\mathcal O(QN_pN_r\log_2 N_r)$.
	In summary, the time complexity of IC-SIGA is $\mathcal O\left(TQ(M_rN_p\log_2 N_p + N_pN_r\log_2 N_r)\right)$
	b
	where $T$ is the iteration number.

Then, we analyze the space complexity. In IC-SIGA, the free variables are $N$-dimensional vectors $\bm\mu^{temp}$, $\bm\mu^t$, and the number of free variables is $2N$. Therefore, the space complexity of IC-SIGA is $\mathcal O(N)$.

\section{Simulation Results}
\begin{table}[htbp]
	\centering
	\caption{Parameter Setting of the QuaDRiGa}
	\label{tab:SimParaCh3}
	\renewcommand\arraystretch{0.85}
	\begin{tabular}{cc}
		\hline
		Parameter & Value\\
		\hline
		Number of BS antenna $M_r = M_z \times M_x$ 
		& 128 $=$ 8 $\times$ 16\\
		UT number $K$ 	& 12, 24\\
		Center frequency $f_c$	& 4.8GHz\\
		Number of subcarriers $N_c$	& 2048\\
		Subcarrier spacing $\Delta f$	& 30kHz\\
		Number of training subcarriers $M_p$	& 120\\
		CP length $M_g$	& 144\\
		\hline
	\end{tabular}
\end{table}
\begin{figure}[htbp] 
	\centering
	\includegraphics[width=0.65\linewidth]{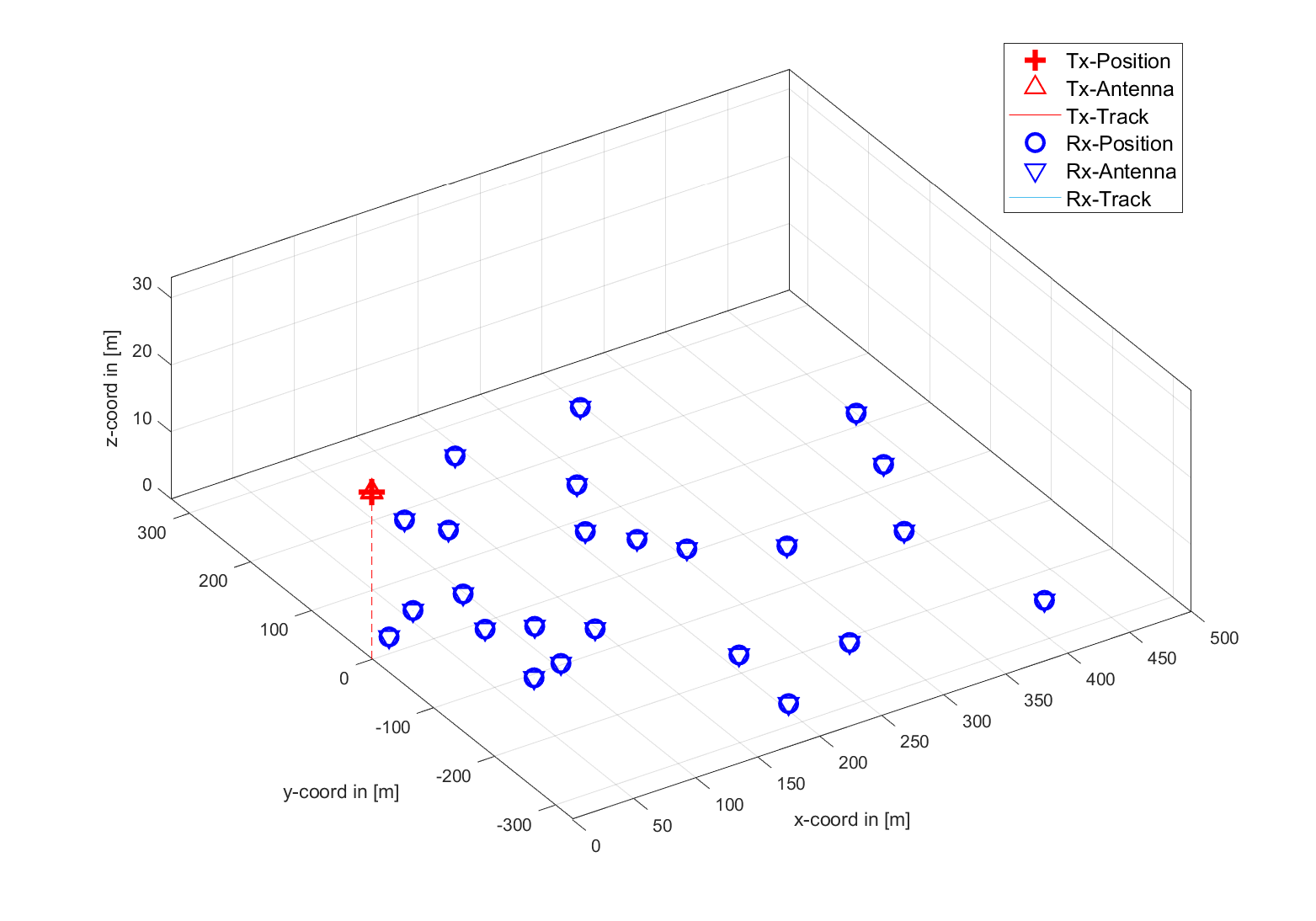}
	\caption{The layout of the massive MIMO-OFDM system.}
	\label{fig:UserPositions24U}
\end{figure}

In this section, we provide simulation results to show the
performance of the proposed IC-IGA and IC-SIGA for MIMO-OFDM channel estimation.
We use the widely used QuaDRiGa channel model. The simulation scenario is set to “3GPP$\_$38.901$\_$UMa$\_$NLOS”.
The main parameters for the simulations are summarized in Table \ref{tab:SimParaCh3}.
The layout of this massive MIMO-OFDM system is plotted in Fig.~\ref{fig:UserPositions24U}, where the location of the BS is at $(0, 0, 25)$, and the users are randomly generated in a $120^{\circ}$ sector with radius $r = 500$m around $(0, 0, 0)$ at $1.5$m height.
The channel is normalized as $\mathbb E\{\|\mathbf G_k\|_F^2\} = M_r M_p$.
Fine factors are set as $F_z=F_x=F_p=2$.
The SNR is set as SNR $=1/\sigma_z^2$.
We use the algorithm proposed in \cite{lu20242d} to obtain the beam domain channel power matrices $\bm\Omega_k,\forall k$. 
The normalized mean-squared error (NMSE) is used as the performance metric for the channel estimation, and is defined as
\begin{IEEEeqnarray}{Cl}
	\label{eq:NMSE}
	\text{NMSE} = \frac{1}{KN_{sam}}
	\sum\limits_{k=1}^{K}
	\sum\limits_{n=1}^{N_{sam}}
	\frac{\|\bar{\mathbf G}_{k,n} - \mathbf G_{k,n}\|_F^2}
	{\|\mathbf G_{k,n}\|_F^2}
\end{IEEEeqnarray}
where $N_{sam}$ is the number of the received pilot signals, $\mathbf G_{k,n}$ is the exact space-frequency domain channel matrix, and $\bar{\mathbf G}_{k,n}$ is the estimated space-frequency domain channel matrix.
We set $N_{sam} = 200$ in our simulations.

\begin{table}[htbp]
	\centering
	\caption{Complexities of algorithms}
	\label{tab:Ch3Comp}
	\renewcommand\arraystretch{0.75}
	\begin{tabular}{ccc}
		\hline
		Algorithm & Time Complexity & Space Complexity\\
		\hline
		IC-IGA & $\mathcal O(TN^2)$ & $\mathcal O(N)$\\
		IC-SIGA & $\mathcal O\left(TQ(M_rN_p\log_2 N_p + N_pN_r\log_2 N_r)\right)$ & $\mathcal O(N)$\\
		IGA & $\mathcal O(TMN)$ & $\mathcal O(MN)$\\
		MMSE  & $\mathcal O(N^3+N^2M)$ & $\mathcal O(N^3)$\\
		\hline
	\end{tabular}
\end{table}

\begin{figure}[htbp] 
	\centering
	\includegraphics[width=0.65\linewidth]{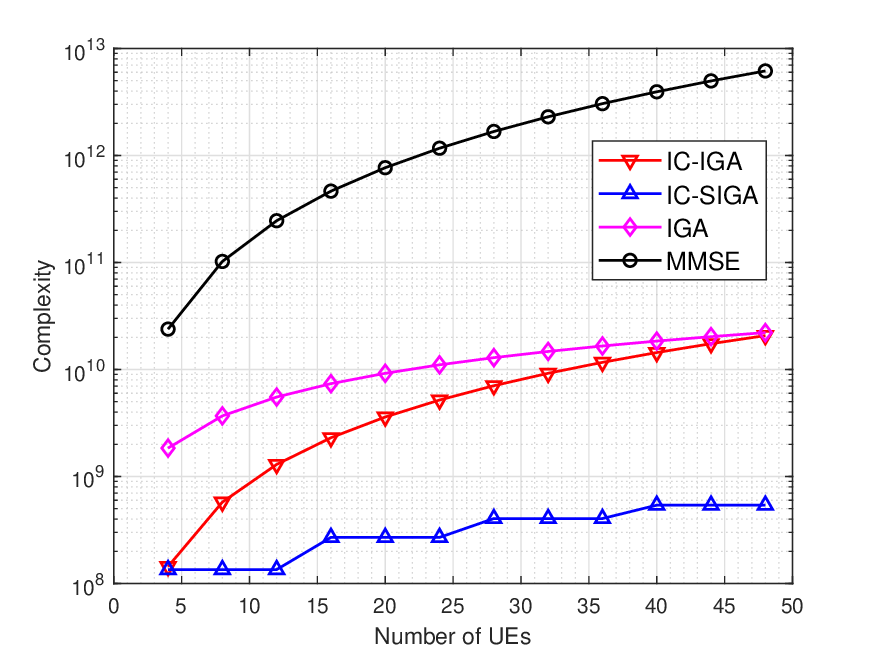}
	\caption{Time Complexity.}
	\label{fig:TComp}
\end{figure}

\begin{figure}[htbp] 
	\centering
	\includegraphics[width=0.65\linewidth]{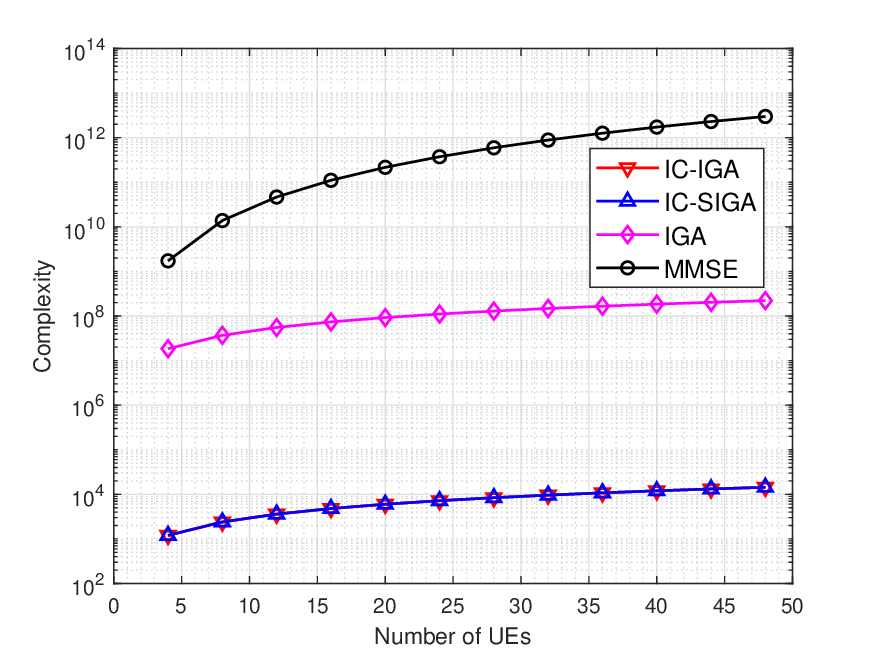}
	\caption{Space Complexity.}
	\label{fig:SComp}
\end{figure}

First, we present simulation results to show the complexity performance of different algorithms.
The complexity of the IC-IGA, IC-SIGA, IGA, and MMSE algorithms is summarized in Table \ref{tab:Ch3Comp}. 
The time complexity and space complexity of the above four algorithms are presented in Figs.~\ref{fig:TComp} and \ref{fig:SComp} for comparison, where the number of iterations is $T=100$, the number of pilot roots is $Q = \lceil K/P \rceil$, and simulation parameters are configured as in Table \ref{tab:SimParaCh3}.
From the figure, it can be found that the order of the time complexities of these algorithms is IC-SIGA$<$IC-IGA$<$IGA$<$MMSE, and the order of the space complexities of these algorithms is IC-SIGA$=$IC-IGA$<$IGA$<$MMSE. 
Meanwhile, the time complexity of IC-SIGA algorithm is the same for the same number of pilot roots.
The time complexity of IC-IGA is much less than the MMSE algorithm and also less than the IGA, whereas the space complexity of IC-IGA is much less than the IGA and MMSE algorithm.
The time complexity of IC-SIGA is much less than other algorithms, and the space complexity of IC-SIGA is much less than that of the IGA and the MMSE algorithm.

\begin{figure}[htbp] 
	\centering
	\includegraphics[width=0.65\linewidth]{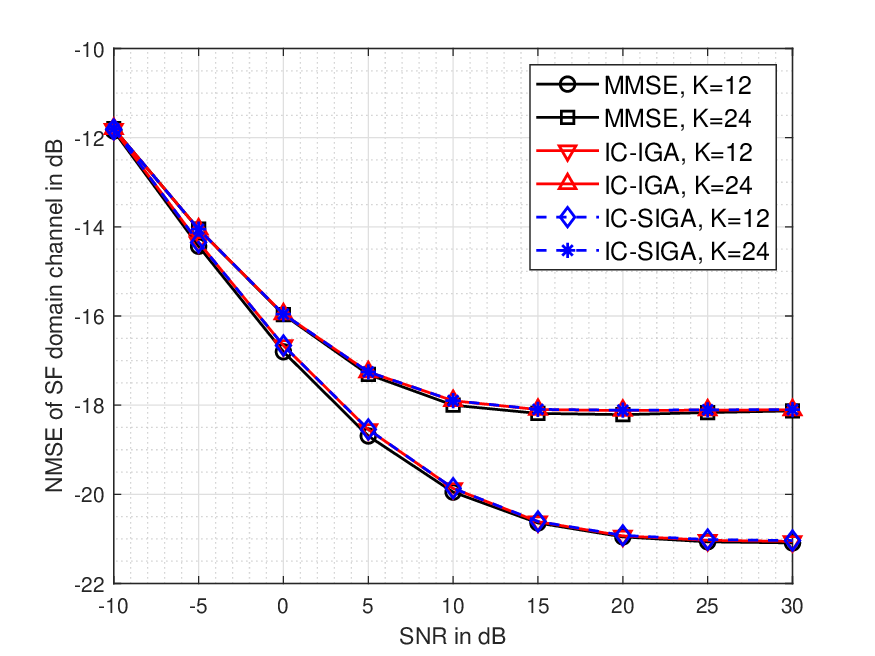}
	\caption{NMSE performance of IC-IGA and IC-SIGA compared with MMSE}
	\label{fig:TIGA_12U_24U}
\end{figure}

Fig.~\ref{fig:TIGA_12U_24U} shows the NMSE performance of IC-IGA and IC-SIGA
channel estimation compared with IGA and MMSE. 
The user number is set to be $K=12$ and $K=24$. 
The iteration numbers of IC-IGA and IC-SIGA are set as 100.
The damping coefficients of IC-IGA and IC-SIGA are $\alpha_{\text{IC-IGA}}=0.45$ and $\alpha_{\text{IC-SIGA}}=0.25$, respectively.
From the figure, it can be seen that both IC-IGA and IC-SIGA can obtain almost the same performance as MMSE for all SNR scenarios with two numbers of users. 
In addition, the performance using orthogonal pilots is more accurate, which is because the non-orthogonal pilots introduce interference from the pilots of users with other roots.

\begin{figure}[htbp] 
	\centering
	\includegraphics[width=0.65\linewidth]{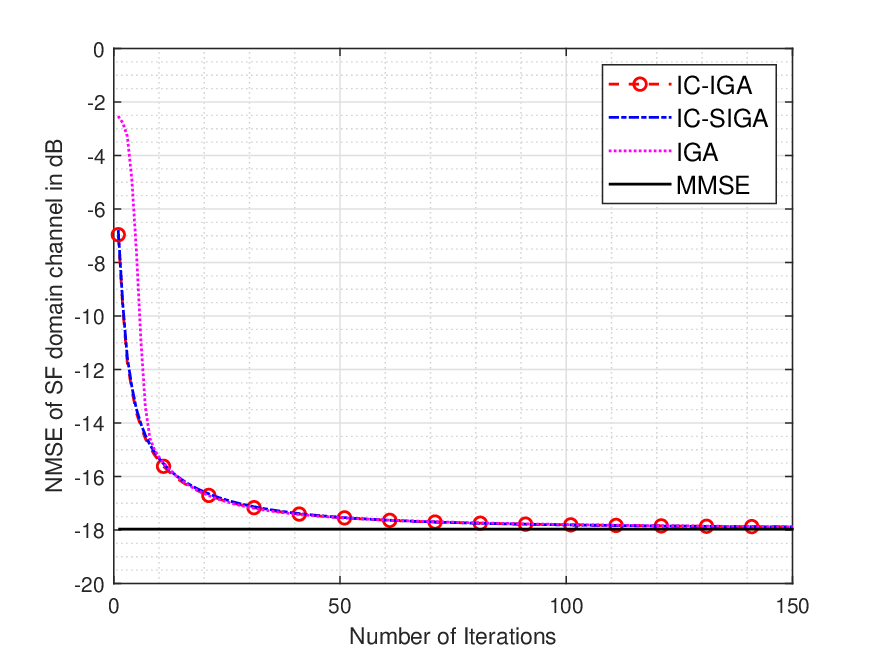}
	\caption{Convergence performance of IC-IGA and IC-SIGA at SNR=10dB}
	\label{fig:IGA_TIGA_Iter}
\end{figure}

Fig.~\ref{fig:IGA_TIGA_Iter} plots the convergence performance of the IC-IGA and IC-SIGA for $K=24$ at SNR=$10$dB. The results of the MMSE estimation and the IGA in \cite{yang2022channel} are given for comparison. The damping coefficients of IGA, IC-IGA and IC-SIGA are 
$\alpha_{\text{IGA}}=0.05$, $\alpha_{\text{IC-IGA}}=0.45$ and $\alpha_{\text{IC-SIGA}}=0.25$.
As shown in the figure, the NMSE performance of IC-IGA and IC-SIGA can converge close to that of the MMSE estimation, which is consistent with Theorems \ref{th: IC-IGA Equilibrium} and \ref{th: IC-SIGA Equilibrium}.
Furthermore, the NMSE of the IC-IGA decreases rapidly at the beginning, but the convergence speeds of the three algorithms, IC-IGA, IC-SIGA, and IGA, are nearly the same after about $10$ iterations.

\begin{figure}[htbp] 
	\centering
	\includegraphics[width=0.65\linewidth]{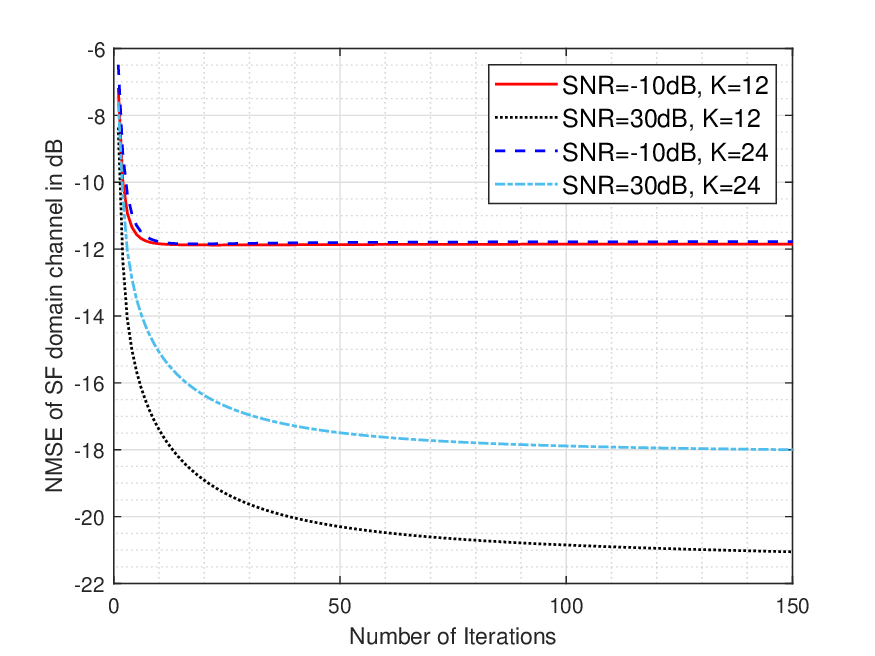}
	\caption{Convergence performance of IC-IGA}
	\label{fig:ICIGA_12U_24U_iter}
\end{figure}
\begin{figure}[htbp] 
	\centering
	\includegraphics[width=0.65\linewidth]{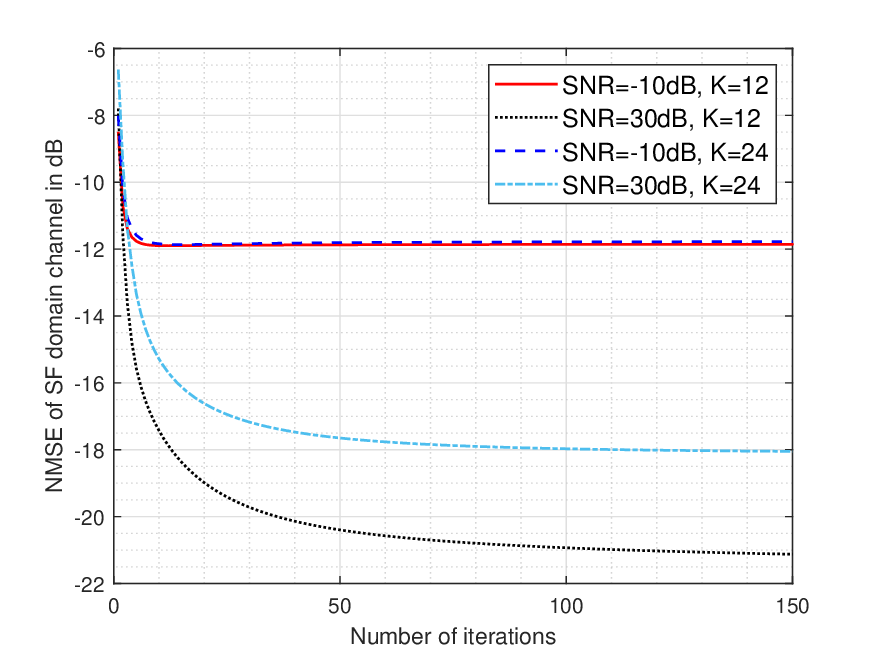}
	\caption{Convergence performance of IC-SIGA}
	\label{fig:ICSIGA_12U_24U_iter}
\end{figure}
Fig.~\ref{fig:ICIGA_12U_24U_iter} and Fig. \ref{fig:ICSIGA_12U_24U_iter} show the convergence performance curves of IC-IGA and IC-SIGA for the orthogonal pilots case ($K=12$) and non-orthogonal pilots ($K=24$) case at different SNRs. The SNRs under consideration are low SNR scenarios $\text{SNR}=-10\text{dB}$ and high SNR scenarios where $\text{SNR}=30\text{dB}$.
From the figure, it can be seen that IC-IGA and IC-SIGA can approach convergence within $10$ iterations in the low SNR scenario, and $60$ iterations in the high SNR scenario. 

\section{Conclusion}
In this paper, manifolds of complex Gaussian distributions are illustrated under information geometry theory, and a unified information geometry framework for channel estimation is described. 
To obtain an interference cancellation style algorithm, a modified MMSE form that has the same mean as the original MMSE estimator is constructed. 
Based on the unified framework and the modified form, the IC-IGA is then proposed for massive MIMO. The form of IC-IGA is simpler than IGA. Then, IC-SIGA is proposed to further reduce the complexity. 
The equilibria and complexities of the algorithms are analyzed.
Simulation results show that the proposed methods can obtain similar performance to the IGA algorithm with fewer iterations and lower complexity.

\appendices

\section{Proof of Theorem \ref{th: theorem MMSE equivalence}}
\label{appendice: theorem MMSE equivalence}
It is easy to verify that $\mathbf I + \mathbf T \bm\Upsilon$ is invertible. Simultaneously multiplying $\mathbf I + \mathbf T \bm\Upsilon$ inside and outside the matrix inversion in the MMSE estimator yields
\begin{IEEEeqnarray}{Cl}
	&\quad \left( \sigma_z^{-2}\mathbf A^H\mathbf A+ \mathbf D^{-1}\right)^{-1} \left( \sigma_z^{-2} \mathbf A^H \mathbf y\right) \notag\\
	&= \left( (\mathbf I + \mathbf T \bm\Upsilon)(\sigma_z^{-2}\mathbf A^H\mathbf A+ \mathbf D^{-1})\right)^{-1} (\mathbf I + \mathbf T \bm\Upsilon)\left( \sigma_z^{-2} \mathbf A^H \mathbf y\right) \notag\\
	&= \left( \sigma_z^{-2}\mathbf A^H\mathbf A  + \mathbf D^{-1} + \mathbf T \bm\Upsilon \sigma_z^{-2}\mathbf A^H\mathbf A + \mathbf T \bm\Upsilon \mathbf D^{-1} \right)^{-1}
	\left( \sigma_z^{-2} \mathbf A^H \mathbf y + \mathbf T \bm\Upsilon \sigma_z^{-2} \mathbf A^H \mathbf y  \right).
\end{IEEEeqnarray}
By using $\sigma_z^{-2}\mathbf A^H\mathbf A=\mathbf T^H + \mathbf I \odot (\sigma_z^{-2}\mathbf A^H\mathbf A)$, it follows that
\begin{IEEEeqnarray}{Cl}
&\quad \left( \sigma_z^{-2}\mathbf A^H\mathbf A+ \mathbf D^{-1}\right)^{-1} \left( \sigma_z^{-2} \mathbf A^H \mathbf y\right) \notag\\
	&= \left( \sigma_z^{-2}\mathbf A^H\mathbf A  + \mathbf D^{-1} + \mathbf T \bm\Upsilon \mathbf T^H + \mathbf T \bm\Upsilon (\mathbf I \odot (\sigma_z^{-2}\mathbf A^H\mathbf A))+ \mathbf T \bm\Upsilon \mathbf D^{-1} \right)^{-1}
	\left( \sigma_z^{-2} \mathbf A^H \mathbf y + \mathbf T \bm\Upsilon \sigma_z^{-2} \mathbf A^H \mathbf y  \right) \notag\\
	&= \left( \sigma_z^{-2}\mathbf A^H\mathbf A  + \mathbf D^{-1} + \mathbf T \bm\Upsilon \mathbf T^H+ \mathbf T \bm\Upsilon \bm\Upsilon^{-1} \right)^{-1}
	\left( \sigma_z^{-2} \mathbf A^H \mathbf y + \mathbf T \bm\Upsilon \sigma_z^{-2} \mathbf A^H \mathbf y  \right) \notag\\
	&= \left( \sigma_z^{-2}\mathbf A^H\mathbf A  + \mathbf D^{-1} + \mathbf T + \mathbf T \bm\Upsilon \mathbf T^H\right)^{-1} \left( \sigma_z^{-2} \mathbf A^H \mathbf y + \mathbf T \bm\Upsilon \sigma_z^{-2} \mathbf A^H \mathbf y  \right)
\end{IEEEeqnarray}
where the second equality is due to the definition of $\bm\Upsilon$. The above result means \eqref{eq:newMMSE} is equivalent to the MMSE estimator.

\section{Proof of Theorem \ref{th: theorem beliefs IC-IGA}}
\label{appendice: theorem beliefs IC-IGA}

The subscript $t$ is omitted here for convenience.
After placing the $n$-th row and $n$-th column of $(\bm\Lambda_{-n} + \mathbf C_n)$ in the first row and first column by left-multiplying of $\mathbf P_{1n}$ and right-multiplying of $\mathbf P_{1n}^H$, its inverse matrix can be computed by applying block matrix inversion formula, \textit{i.e.},
\begin{IEEEeqnarray}{Cl}
	\left(\mathbf P_{1n}(\bm\Lambda_{-n} + \mathbf C_n)\mathbf P_{1n}^H \right)^{-1}
	= \left(\begin{array}{cc}
		c_n & \bar{\mathbf k}_n^H \\ 
		\bar{\mathbf k}_n & \frac{1}{c_n}\bar{\mathbf k}_n\bar{\mathbf k}_n^H + \bm\Lambda_n
	\end{array}\right) ^{-1} \notag\\
	= \left(\begin{array}{cc}
		r_n^{-1} & -\frac{1}{c_n}\bar{\mathbf k}_n^H \bm\Lambda_n^{-1} \\ 
		-\frac{1}{c_n}\bm\Lambda_n^{-1}\bar{\mathbf k}_n  & \bm\Lambda_n^{-1}
	\end{array}\right) 
\end{IEEEeqnarray}	
where 
\begin{IEEEeqnarray}{Cl}\label{eq:r_n}
	r_n &= c_n - \bar{\mathbf k}_n^H\left(\frac{1}{c_n}\bar{\mathbf k}_n \bar{\mathbf k}_n^H + \bm\Lambda_n\right)^{-1}\bar{\mathbf k}_n. 
\end{IEEEeqnarray}
By using the Sherman-Morrison formula for matrix inversion, we can obtain that	
\begin{IEEEeqnarray}{Cl}	
	r_n &= 
	c_n - \bar{\mathbf k}_n^H\left(\bm\Lambda_n^{-1} - \frac{\bm\Lambda_n^{-1} \frac{1}{c_n}\bar{\mathbf k}_n\bar{\mathbf k}_n^H \bm\Lambda_n^{-1}}
	{1 + \frac{1}{c_n}\bar{\mathbf k}_n^H \bm\Lambda_n^{-1} \bar{\mathbf k}_n} \right)\bar{\mathbf k}_n \notag\\
	&= c_n - \left( \bar{\mathbf k}_n^H\bm\Lambda_n^{-1}\bar{\mathbf k}_n - \frac{\frac{1}{c_n}\bar{\mathbf k}_n^H\bm\Lambda_n^{-1} \bar{\mathbf k}_n\bar{\mathbf k}_n^H \bm\Lambda_n^{-1}\bar{\mathbf k}_n}
	{1 + \frac{1}{c_n}\bar{\mathbf k}_n^H \bm\Lambda_n^{-1} \bar{\mathbf k}_n} \right) \notag\\
	&= c_n -\frac{\bar{\mathbf k}_n^H\bm\Lambda_n^{-1} \bar{\mathbf k}_n}
	{1 +\frac{1}{c_n}\bar{\mathbf k}_n^H \bm\Lambda_n^{-1} \bar{\mathbf k}_n} \notag\\
	&= \frac{c_n}
	{1 +\frac{1}{c_n}\bar{\mathbf k}_n^H \bm\Lambda_n^{-1} \bar{\mathbf k}_n} \notag\\
	&= \frac{c_n}
	{1 + e_n}
\end{IEEEeqnarray}	
where $e_n = \frac{1}{c_n}\bar{\mathbf k}_n^H \bm\Lambda_n^{-1} \bar{\mathbf k}_n$. It can be shown that $c_n = r_n(1+e_n)$.

Further, the natural parameters of the $m$-projection of the auxiliary point to the target manifold are
\begin{IEEEeqnarray}{Cl}
	\bm\Theta_n^0 &= -(\bm\Sigma_n^0)^{-1} 
	= -(\mathbf I \odot \bm\Sigma_n )^{-1} \notag\\
	&= -(\mathbf I \odot(\bm\Lambda_{-n} + \mathbf C_n)^{-1})^{-1} \notag\\
	&= - \mathbf P_{1n}^H \left(\begin{array}{cc}
		r_n & \mathbf 0^H \\ 
		\mathbf 0 & \bm\Lambda_n
	\end{array}\right) \mathbf P_{1n}.
\end{IEEEeqnarray}	
Since
$\bm\Theta_n^0 = - \bm\Lambda_{-n}^0$, then the belief $\bm\Xi_n$ is
\begin{IEEEeqnarray}{Cl}
	\bm\Xi_n &= \bm\Lambda_{-n}^0 - \bm\Lambda_{-n} \notag\\
	&= \mathbf P_{1n}^H 
	\left(\begin{array}{cc}
		r_n & \mathbf 0^H \\ 
		\mathbf 0 & \bm\Lambda_n
	\end{array}\right) \mathbf P_{1n}
	- \mathbf P_{1n}^H 
	\left(\begin{array}{cc}
		0 & \mathbf 0^H \\ 
		\mathbf 0 & \bm\Lambda_n
	\end{array}\right) \mathbf P_{1n}
	\notag\\
	&= \mathbf P_{1n}^H 
	\left(\begin{array}{cc}
		r_n & \mathbf 0^H \\ 
		\mathbf 0 & \mathbf 0
	\end{array}\right) \mathbf P_{1n}.
\end{IEEEeqnarray}	
The mean $\bm\mu_n^0$ of $m$-projection is
\begin{IEEEeqnarray}{Cl}
	\bm\mu_n^0 &= \bm\mu_n
	= (\bm\Lambda_{-n} + \mathbf C_n)^{-1} (\bm\lambda_{-n} + \mathbf  b_n)
	\notag\\
	&= \mathbf P_{1n}^H \left(\begin{array}{cc}
		r_n^{-1} & -\frac{1}{c_n}\bar{\mathbf k}_n^H \bm\Lambda_n^{-1} \\ 
		-\frac{1}{c_n}\bm\Lambda_n^{-1}\bar{\mathbf k}_n  & \bm\Lambda_n^{-1}
	\end{array}\right)  
	\left(\begin{array}{c}
		\frac{1}{\sigma_z^2}\mathbf a_n^H\mathbf y\\ 
		\frac{1}{\sigma_z^2} \frac{1}{c_n}  \bar{\mathbf k}_n \mathbf a_n^H\mathbf y + \bm\lambda_n
	\end{array}\right).
\end{IEEEeqnarray}	 
From 
\begin{IEEEeqnarray}{Cl}
	&\quad -\frac{1}{c_n}\bm\Lambda_n^{-1}\bar{\mathbf k}_n \frac{1}{\sigma_z^2}\mathbf a_n^H\mathbf y
	+ \bm\Lambda_n^{-1} \left(\frac{1}{\sigma_z^2} \frac{1}{c_n}  \bar{\mathbf k}_n \mathbf a_n^H\mathbf y + \bm\lambda_n\right) 
	= \bm\Lambda_n^{-1}\bm\lambda_n
\end{IEEEeqnarray}
we have
\begin{IEEEeqnarray}{Cl}
	\bm\mu_n^0 &= \bm\mu_n 
	= \mathbf P_{1n}^H
	\left(\begin{array}{c}
		\mu_n\\ 
		\bm\Lambda_n^{-1}\bm\lambda_n
	\end{array}\right)
\end{IEEEeqnarray}		
where 
\begin{IEEEeqnarray}{Cl}
	\mu_n &= r_n^{-1}\frac{1}{\sigma_z^2}\mathbf a_n^H\mathbf y 
	  -\frac{1}{c_n}\bar{\mathbf k}_n^H \bm\Lambda_n^{-1} \cdot \frac{1}{\sigma_z^2} \frac{1}{c_n}  \bar{\mathbf k}_n \mathbf a_n^H\mathbf y 
	  -\frac{1}{c_n}\bar{\mathbf k}_n^H \bm\Lambda_n^{-1} \bm\lambda_n.
\end{IEEEeqnarray}	
The second and third terms can be further simplified as
\begin{IEEEeqnarray}{Cl}
	&\quad -\frac{1}{c_n}\bar{\mathbf k}_n^H \bm\Lambda_n^{-1} \cdot \frac{1}{\sigma_z^2} \frac{1}{c_n}  \bar{\mathbf k}_n \mathbf a_n^H\mathbf y \notag\\
	&= -\frac{1}{\sigma_z^2} \frac{1}{(c_n)^2}
	\bar{\mathbf k}_n^H \bm\Lambda_n^{-1} \bar{\mathbf k}_n \mathbf a_n^H\mathbf y \notag\\
	&\overset{(a)}{=}  -\frac{1}{\sigma_z^2} \frac{1}{c_n} e_n \mathbf a_n^H\mathbf y \notag\\
	&\overset{(b)}{=}  -\frac{1}{\sigma_z^2}r_n^{-1}(1+e_n)^{-1} e_n \mathbf a_n^H\mathbf y
\end{IEEEeqnarray}
and	
\begin{IEEEeqnarray}{Cl}
	&-\frac{1}{c_n}\bar{\mathbf k}_n^H \bm\Lambda_n^{-1} \bm\lambda_n
	\overset{(c)}{=}  -r_n^{-1}(1+e_n)^{-1}\bar{\mathbf k}_n^H \bm\Lambda_n^{-1} \bm\lambda_n
\end{IEEEeqnarray}	
where the equality $\overset{(a)}{=} $ is because $e_n = \frac{1}{c_n}\bar{\mathbf k}_n^H \bm\Lambda_n^{-1} \bar{\mathbf k}_n$, and $\overset{(b)}{=}, \overset{(c)}{=}$ is because $c_n = r_n(1+e_n)$.
By using this simplified results, we further obtain
\begin{IEEEeqnarray}{Cl}
	\mu_n &= r_n^{-1}\frac{1}{\sigma_z^2}\mathbf a_n^H\mathbf y 
	- r_n^{-1}(1+e_n)^{-1} e_n \frac{1}{\sigma_z^2}\mathbf a_n^H\mathbf y
	- r_n^{-1}(1+e_n)^{-1}\bar{\mathbf k}_n^H \bm\Lambda_n^{-1} \bm\lambda_n \notag\\
	&= r_n^{-1}(1+e_n)^{-1} \frac{1}{\sigma_z^2}\mathbf a_n^H\mathbf y
	- r_n^{-1}(1+e_n)^{-1}\bar{\mathbf k}_n^H \bm\Lambda_n^{-1} \bm\lambda_n \notag\\
	&= r_n^{-1}(1+e_n)^{-1} \left(\frac{1}{\sigma_z^2}\mathbf a_n^H\mathbf y
	- \bar{\mathbf k}_n^H \bm\Lambda_n^{-1} \bm\lambda_n\right) \notag\\
	&\overset{(d)}{=}  c_n^{-1} \left(\frac{1}{\sigma_z^2}\mathbf a_n^H\mathbf y
	- \bar{\mathbf k}_n^H \bm\Lambda_n^{-1} \bm\lambda_n\right).
\end{IEEEeqnarray}	
where the equality $\overset{(d)}{=}$ is because $c_n = r_n(1+e_n)$, $\mu_n$ and $r_n^{-1}$ are the mean and variance of the $n$-th element of $\mathbf h$ computed from the $n$-th auxiliary manifold, respectively.

Further, the natural parameter $\bm\theta_n^0$ of the $m$-projection is
\begin{IEEEeqnarray}{Cl}
	\bm\theta_n^0 &= -\bm\Theta_n^0 \bm\mu_n^0 
	= \mathbf P_{1n}^H 
	\left(\begin{array}{c}
		r_n \mu_n\\ 
		\bm\lambda_n
	\end{array}\right).
\end{IEEEeqnarray}	
According to $\bm\theta_n^0 = \bm\lambda_{-n}^0$, the belief $\bm\xi_n$ is
\begin{IEEEeqnarray}{Cl}
	\bm\xi_n 
	&= \bm\lambda_{-n}^0 - \bm\lambda_{-n} \notag\\
	&= \mathbf P_{1n}^H 
	\left(\begin{array}{c}
		r_n \mu_n\\ 
		\bm\lambda_n
	\end{array}\right)
	- \mathbf P_{1n}^H 
	\left(\begin{array}{c}
		0\\ 
		\bm\lambda_n
	\end{array}\right) \notag\\
	&= \mathbf P_{1n}^H 
	\left(\begin{array}{c}
		r_n \mu_n\\ 
		\mathbf 0
	\end{array}\right).
\end{IEEEeqnarray}

\section{Proof of Theorem \ref{th: IC-IGA Equilibrium}}
\label{appendice: IC-IGA Equilibrium}
From \eqref{eq:muqhat} and \eqref{eq:turCbn}, at the equilibrium, we have
\begin{IEEEeqnarray}{Cl}
	\sum_{n=1}^N \bm\lambda_{-n}^{\star}
	&= \sum_{n=1}^N(\bm\Lambda_{-n} + \mathbf C_n) \bm\mu_n^{\star} 
	- \sum_{n=1}^N\mathbf b_n \notag\\
	&= \left(\sum_{n=1}^N \bm\Lambda_{-n}^{\star} + \sum_{n=1}^N \mathbf C_n\right)\bm\mu^{\star} 
	- \sum_{n=1}^N\mathbf b_n \notag\\
	&= \left(\sum_{n=1}^N \bm\Lambda_{-n}^{\star}
	+ \sigma_z^{-2}\mathbf A^H\mathbf A  + \mathbf D^{-1} + \mathbf T + \mathbf T \bm\Upsilon \mathbf T^H\right)\bm\mu^{\star} 
	\notag\\
	&\quad - \left( \sigma_z^{-2} \mathbf A^H \mathbf y + \mathbf T \bm\Upsilon \sigma_z^{-2} \mathbf A^H \mathbf y  \right).
\end{IEEEeqnarray}
Substituting  $\sum_{n=1}^N \bm\Lambda_{-n}^{\star}=(N-1)\bm\Lambda^{\star}$ in \eqref{eq:equilLambda} and $\bm\mu^{\star}
= \left(\bm\Lambda^{\star}\right)^{-1} \bm\lambda^{\star}$ into the above equation, we have
\begin{IEEEeqnarray}{Cl}
	\sum_{n=1}^N \bm\lambda_{-n}^{\star}
	&= \left((N-1)\bm\Lambda^{\star}
	+ \sigma_z^{-2}\mathbf A^H\mathbf A  + \mathbf D^{-1} + \mathbf T + \mathbf T \bm\Upsilon \mathbf T^H\right)\bm\mu^{\star} 
	\notag\\
	&\quad - \left( \sigma_z^{-2} \mathbf A^H \mathbf y + \mathbf T \bm\Upsilon \sigma_z^{-2} \mathbf A^H \mathbf y  \right) \notag\\
	&= (N-1)\bm\lambda^{\star}
	+ \left(\sigma_z^{-2}\mathbf A^H\mathbf A  + \mathbf D^{-1} + \mathbf T + \mathbf T \bm\Upsilon \mathbf T^H\right)\bm\mu^{\star} 
	\notag\\
	&\quad - \left( \sigma_z^{-2} \mathbf A^H \mathbf y + \mathbf T \bm\Upsilon \sigma_z^{-2} \mathbf A^H \mathbf y  \right).
\end{IEEEeqnarray}
According to $\sum_{n=1}^N \bm\lambda_{-n}^{\star}=(N-1)\bm\lambda^{\star}$ in \eqref{eq:equilLambda}, it can be further obtained that
\begin{IEEEeqnarray}{Cl}
	0 = \left(\sigma_z^{-2}\mathbf A^H\mathbf A  + \mathbf D^{-1} + \mathbf T + \mathbf T \bm\Upsilon \mathbf T^H\right)\bm\mu^{\star} 
	- \left( \sigma_z^{-2} \mathbf A^H \mathbf y + \mathbf T \bm\Upsilon \sigma_z^{-2} \mathbf A^H \mathbf y  \right)
\end{IEEEeqnarray}
which means
\begin{IEEEeqnarray}{Cl}
	\bm\mu^{\star} 
	= \left( \sigma_z^{-2}\mathbf A^H\mathbf A  + \mathbf D^{-1} + \mathbf T + \mathbf T \bm\Upsilon \mathbf T^H\right)^{-1} 
	\left( \sigma_z^{-2} \mathbf A^H \mathbf y + \mathbf T \bm\Upsilon \sigma_z^{-2} \mathbf A^H \mathbf y  \right).
\end{IEEEeqnarray}
From Theorem \ref{th: theorem MMSE equivalence} it follows that this equation is equal to the mean $\bm\mu_{\text{MMSE}}$ of the MMSE estimation, \textit{i.e.}, \begin{IEEEeqnarray}{Cl}
	\bm\mu^{\star} 
	= \left( \sigma_z^{-2}\mathbf A^H\mathbf A  + \mathbf D^{-1}\right)^{-1} \sigma_z^{-2} \mathbf A^H \mathbf y.
\end{IEEEeqnarray}

\section{Proof of Theorem \ref{th: IC-SIGA Equilibrium}}
\label{appendice: IC-SIGA Equilibrium}

From \eqref{eq:mu_nvec}, at the Equilibrium, we have
\begin{IEEEeqnarray}{Cl}
	\bm\mu^{\star}
	&= \frac{1}{\sigma_z^2} \left(\mathbf A^H\mathbf y
	- \mathbf A^H\mathbf A \bm\mu^{\star}
	+ (\mathbf I \odot \mathbf A^H\mathbf A) \bm\mu^{\star}\right)./\mathbf c.
\end{IEEEeqnarray}
It can be reexpressed as
\begin{IEEEeqnarray}{Cl}
	\left(\text{diag}(\mathbf c) \cdot \mathbf I + \sigma_z^{-2} \mathbf A^H \mathbf A - \sigma_z^{-2} \left( \mathbf I \odot \mathbf A^H \mathbf A\right)\right)\bm\mu^{\star}
	&= \sigma_z^{-2} \mathbf A^H\mathbf y.
\end{IEEEeqnarray}
From $c_n = \frac{1}{\sigma_z^2} \mathbf a_n^H\mathbf a_n + d_n^{-1}$, it follows that $\text{diag}(\mathbf c) = \sigma_z^{-2} \left( \mathbf I \odot \mathbf A^H \mathbf A\right) + \mathbf D^{-1}$, then it follows that
\begin{IEEEeqnarray}{Cl}
	\left(\sigma_z^{-2} \mathbf A^H \mathbf A + \mathbf D^{-1} \right)\bm\mu^{\star}
	&= \sigma_z^{-2} \mathbf A^H\mathbf y
\end{IEEEeqnarray}
which means
\begin{IEEEeqnarray}{Cl}
	\bm\mu^{\star} 
	= \left( \sigma_z^{-2}\mathbf A^H\mathbf A  + \mathbf D^{-1}\right)^{-1} \sigma_z^{-2} \mathbf A^H \mathbf y.
\end{IEEEeqnarray}

\bibliographystyle{IEEEtran}
\bibliography{IEEEabrv,this_reference}

\end{document}